\newtheorem*{theorem}{Theorem}
\begin{document}

\title{Experimental entanglement of temporal order}

\author{Giulia Rubino}
\affiliation{Vienna Center for Quantum Science and Technology (VCQ), Faculty of Physics, University of Vienna, Boltzmanngasse 5, Vienna A-1090, Austria}
\affiliation{Quantum Engineering Technology Labs, H. H. Wills Physics
Laboratory and Department of Electrical \& Electronic Engineering,
University of Bristol, Bristol BS8 1FD, United Kingdom}
\orcid{0000-0002-4249-522X}
\email{giulia.rubino@bristol.ac.uk}

\author{Lee A. Rozema}
\affiliation{Vienna Center for Quantum Science and Technology (VCQ), Faculty of Physics, University of Vienna, Boltzmanngasse 5, Vienna A-1090, Austria}

\author{Francesco Massa}
\affiliation{Vienna Center for Quantum Science and Technology (VCQ), Faculty of Physics, University of Vienna, Boltzmanngasse 5, Vienna A-1090, Austria}

\author{Mateus Ara\'ujo}
\affiliation{Vienna Center for Quantum Science and Technology (VCQ), Faculty of Physics, University of Vienna, Boltzmanngasse 5, Vienna A-1090, Austria}
\affiliation{Institute for Quantum Optics \& Quantum Information (IQOQI), Austrian Academy of Sciences, Boltzmanngasse 3, Vienna A-1090, Austria}

\author{Magdalena Zych}
\affiliation{Centre for Engineered Quantum Systems, School of Mathematics and Physics, The University of Queensland, St Lucia, QLD 4072,
Australia}

\author{\v{C}aslav Brukner}
\affiliation{Vienna Center for Quantum Science and Technology (VCQ), Faculty of Physics, University of Vienna, Boltzmanngasse 5, Vienna A-1090, Austria}
\affiliation{Institute for Quantum Optics \& Quantum Information (IQOQI), Austrian Academy of Sciences, Boltzmanngasse 3, Vienna A-1090, Austria}

\author{Philip Walther}
\affiliation{Vienna Center for Quantum Science and Technology (VCQ), Faculty of Physics, University of Vienna, Boltzmanngasse 5, Vienna A-1090, Austria}
\orcid{0000-0002-4249-522X}

\begin{abstract}
  The study of causal relations has recently been applied to the quantum realm, leading to the discovery that not all physical processes have a definite causal structure. While indefinite causal processes have previously been experimentally shown, these proofs relied on the quantum description of the experiments. Yet, the same experimental data could also be compatible with definite causal structures within different descriptions. Here, we present the first demonstration of indefinite temporal order outside of quantum formalism. We show that our experimental outcomes are incompatible with a class of generalised probabilistic theories satisfying the assumptions of locality and definite temporal order. To this end, we derive physical constraints (in the form of a Bell-like inequality) on experimental outcomes within such a class of theories. We then experimentally invalidate these theories by violating the inequality using entangled temporal order. This provides experimental evidence that there exist correlations in nature which are incompatible with the assumptions of locality and definite temporal order.
\end{abstract}
\maketitle

\section{Introduction}

Bell's theorem revolutionized the foundations of physics, leading to experiments which could demonstrate that nature cannot be described by a local-causal theory, and paving the way for modern quantum information~\cite{Bell1964, RevModPhys.86.419}.
One of the strengths of Bell's theorem is that it allows one to draw conclusions about nature without referring to the underlying physical theory.
This is a crucial feature, since ``local causality'' is conceived as a hypothesis about a fundamental property of nature, and, as such, its experimental violation reveals a statement about how nature must (or must not) be, and not just about any specific theory used to describe it.

Over the past decades, tests of Bell's theorem have been performed with
many different physical systems thereby entangling various observables [such as spin~\cite{PhysRevD.14.2543, Nature.409.791, Nature.526.682}, polarization~\cite{PhysRevLett.28.938, PhysRevLett.49.1804, PhysRevLett.115.250402, PhysRevLett.115.250401}, position~\cite{PhysRevLett.92.210403}, and energy~\cite{PhysRevA.47.R2472, PhysRevLett.64.2495}] of two or more particles.
All these experimental tests involve two or more parties making measurements on entangled particles in distant laboratories, and determining their correlations. These tests prove that, in general, correlations between space-like separated events cannot be explained on the basis of a past common cause and local choices of measurements. However, since these correlations are always no-signalling (i.e., the statistics observed in one laboratory are independent of the choices of measurements made in other laboratories), Bell's theorem and related experimental tests do not address specific causal relations between events in which an experimental intervention may influence one of the events (i.e., signaling correlations). This stronger notion of causality~\cite{NatPhys.10.259} is at the heart of the modern research field of indefinite quantum causality, and is the subject of the present experimental test.

Thus far, in all established physical theories, it was assumed that the order between events is well-defined. This means that, that for any two causally related events $A$ and $B$, either $A$ signals to $B$, or $B$ signals to $A$. In the language of quantum information, these two events could be seen as the input and output of a quantum channel.
However, it has recently been realized that quantum mechanics also allows for the existence of processes other than quantum channels, i.e., processes that are neither causally ordered, nor a probabilistic mixture of causally ordered processes. For these processes, it is genuinely indefinite whether $A$ signals to $B$ or $B$ signals to $A$, and so they are called processes with an indefinite causal structure.
Under ``processes,'' we define the set of causal relations between operations performed in different local laboratories~\cite{PhysRevA.88.022318, NatComm.3.012316, NatPhys.10.259}.
More precisely, a quantum process is called causally separable if it can be decomposed as a convex combination of causally ordered processes, otherwise it is causally non-separable. 
(Note that the term ``temporal'' order is used here to refer to the order among operations which cannot be used to receive signals --- in particular, to unitary ones --- whereas ``causal'' order refers to more general operations which allow both receiving and sending signals between laboratories.)

Recently, a method for certifying causal non-separability, based on ``causal witnesses,'' was developed~\cite{NewJournPhys.17.102001, Oreshkov_2016NJP, ScienRep.6.26018}, and used to experimentally demonstrate that a certain process --- a quantum-switch~\cite{PhysRevA.86.040301} --- is causally non-separable~\cite{Rubinoe1602589, PhysRevLett.121.090503}.  
In the quantum-switch, a qubit is transmitted between two parties, and the order in which the parties receive and act on it is entangled with a second system.
This can result in a scenario in which operations are applied on the system in a quantum superposition of different temporal orders. 
The existence of such a superposition has been experimentally demonstrated~\cite{Rubinoe1602589, PhysRevLett.121.090503, NatCommun.6}.
However, the certification of this ``indefiniteness'' of temporal orders was theory-dependent, requiring the assumption that the system under investigation and the applied operations were described by quantum theory. In more detail, Ref.~\cite{Rubinoe1602589, PhysRevLett.121.090503} reported the measurement of a value for a causal witness that could not be explained by any model making the following three assumptions: (a) there was a definite causal order between the parties, (b) each party acted only once, and (c) their operations are described by quantum theory. 
Nevertheless, the results of these experiments could potentially have also been explained in accordance with hypotheses (a) and (b) within a different theory (i.e., outside the quantum theory). Thus, the nature of indefinite causal order has not yet been probed without the use of quantum formalism to describe it.


In addition to theory-dependent causal witnesses, there are also device-independent ways of certifying indefinite causal order via ``causal inequalities''~\cite{NatComm.3.012316, 1367-2630-18-1-013008}.
These inequalities only require measuring the probabilities of outcomes for different parties in the process under consideration without knowledge of the internal functioning of the devices.
Any probabilities that show signalling in only one direction --- which can be interpreted as an influence from the past to the future ---, or that is a convex mixture of processes which allow signalling only in one direction (from $A$ to $B$ or from $B$ to $A$), satisfy causal inequalities.
Nevertheless, it can be shown that the quantum-switch satisfies all such causal inequalities [see Refs.~\cite{NewJournPhys.17.102001, Oreshkov_2016NJP} or the Suppl. Information for details], and, currently, it is not known whether or not it is possible to realize a process which violates a causal inequality.
The question then arises if it is at all possible to prove the existence of an indefinite causal order in a manner that does not rely on the quantum description of the experiment.
This is a relevant question, since such an experimental verification of indefinite causality would show that this is \textit{not} a feature of a particular theory, 
but a \textit{fundamental property} of a whole class of theories which can be used to describe nature. 

In this work, we answer the above question affirmatively by presenting an experimental verification of Bell's theorem for temporal order, which is formulated outside of the quantum framework.
To this end, we generalize a Bell inequality for temporal order~\cite{mag}, and then experimentally violate it.
The Bell inequality is shown to be fulfilled in a class of so-called ``\textit{generalized probabilistic theories}'' in which the states and the laboratory operations are local, and the operations are applied in a definite order.
The experimental violation of the Bell inequality presented here demonstrates, independent of quantum formalism, that there exist correlations in nature which are incompatible with a class of theories assuming the order of events as locally pre-defined.

Finally, we notice that, while our inequality is valid for a class of generalized probabilistic theories, it does depend on the internal functionality of experimental devices, and in this sense it does not have the same ``device-independent'' status as the original Bell's theorem. Thus, our work provides a proof of indefinite causality which is weaker than a violation of causal inequalities (i.e., a device-independent proof), but stronger than a measurement of a causal witnesses (a theory-dependent proof). It remains an open question, even from a theoretical viewpoint, whether it is possible to provide stronger evidence of indefinite causality than a violation of the present Bell inequality for temporal order.

\section{No-go theorem for definite temporal order}

We now introduce a \textit{no-go theorem} for definite temporal order that applies to a class of generalized probabilistic theories (GPTs) in which the order of local events is assumed to be pre-defined. 
GPTs are a general framework that specifies a set of operations which can be applied on physical systems, assigns probabilities to experimental outcomes~\cite{GPT1,GPT2,GPT3,1367-2630-13-6-063001}, and which encompasses all operational theories -- including classical probability theory and quantum theory as special cases. The no-go theorem which we present here was previously derived in the context of gravity ~\cite{mag}. Our derivation uses an assumption about the initial state of the systems which is weaker than that in Ref.~\cite{mag} (we consider Bell-local states rather than separable states, which are a subset of Bell-local states), and a  different notion of locality. (The relation between the assumptions and implications of the current work and those of Ref.~\cite{mag} are analyzed in Appendix~\ref{Met-Sec:ProofTheorem}-\ref{Met-Sec:RelToTheory}.)


We first define what we mean by a causal order in a GPT. Consider a system in the state $\omega \in \Omega$ of a GPT state space $\Omega$ and imagine two parties, Alice and Bob, who perform some operations on this state. For example, suppose that the operation in Alice's laboratory is given by a transformation $\mathcal{A}$ and that in Bob's laboratory is given by a transformation $\mathcal{B}$. Alice's and Bob's operations are said to undergo a process that is ``causally separable'' in GPTs whenever Alice's operation happens before or simultaneously to Bob's ($\mathcal{A} \preceq \mathcal{B}$), Bob's operation happens before or simultaneously to Alice's ($\mathcal{B} \preceq \mathcal{A}$), or there is a convex mixture of these two cases:
\begin{equation}
\label{eqn:convex_mixture}
\mathcal{S}(\omega)=\zeta \cdot \mathcal{B}\bigl(\mathcal{A}(\omega)\bigr) +(1-\zeta) \cdot  \mathcal{A}\bigl(\mathcal{B}(\omega)\bigr),
\end{equation}
where $0 \leqslant \zeta \leqslant 1$ is the probability with which one or the other order is chosen and $\mathcal{Y}\bigl(\mathcal{X}(\cdot)\bigr)$ is a composition of operations $\mathcal{X}$ and $\mathcal{Y}$.  
(While in the current work we limit our analysis to the case of only $N=2$ parties, an analogue relation can be established for $N>2$ parties, giving rise to a classical mixture of all possible permutations among the $N$ parties, or to  a dynamical causal order, where the causal order between operations may depend on operations performed beforehand~\cite{Abbott2017genuinely}.) If a process cannot be written in the form of Eq.~\eqref{eqn:convex_mixture}, it is called a ``causally non-separable process.''


Within the GPT framework, we now consider $\omega$ to be a state of the following composite system: one system (the \textit{control} system) governing the order in which the operations $\mathcal{A}$ and $\mathcal{B}$ are applied, and another system (the \textit{target} system) on which the operations are performed. We will further consider that there are two parties, S1 and S2, each possessing one such composite system. No restrictions are applied to the state of the control system (thus, for instance, the composite control state may violate a Bell inequality).

In Appendix~\ref{Met-Sec:ProofTheorem} we prove a \textit{no-go theorem}, stating that any two-party system obeying the following three assumptions cannot violate a Bell inequality (below we briefly summarize our theorem, saving the detailed version for Appendix~\ref{Met-Sec:ProofTheorem}).

\vspace{2mm}
\begin{enumerate}[label=\Roman*)]
\item The initial joint state of the two target subsystems is \textit{Bell-local} (i.e., it satisfies the Bell's ``local-causality" condition for any pair of local measurements).

\item The laboratory operations preserve Bell-locality of the state of the target subsystems (i.e., when applied to a Bell-local state of the target systems, they produce a Bell-local state.).

\item The order of local operations on the two target subsystems is well-defined.
\end{enumerate}
\vspace{2mm}
We will briefly comment on assumptions I and II below, and refer to Sec.~\ref{sec:Results} for an in-depth analysis of all three assumptions. 

In our experiment and for the class of GPTs considered, we can demonstrate stronger conditions than I and II. Indeed, the initial joint state of the targets is separable within the GPTs; and the laboratory transformations are local maps on the target systems within the GPTs (i.e., they transform separable states into separable states). Then, I and II follow from these stronger conditions since any separable state is Bell-local.  We could have formulated I and II in terms of separable states of the GPTs, but decided to leave them in the present form, since Bell's locality is a weaker condition than separability, and the formulation does not involve theory-dependent notions.

There are two forms of violation of Assumption II: IIa) The laboratory operations can induce non-local interaction between the target subsystems of parties S$1$ and S$2$. IIb) Within a single party S$i$, $i=1,2$, the laboratory operations may  
``couple'' the control and target system. Such a ``coupling'' could transfer existing non-local correlations between the pair of controls to the pair of targets, thereby enabling a violation of Bell inequalities. We will provide experimental evidence that neither case occurs in our experiment. 

In the next section, we will present a quantum mechanical process that violates this no-go theorem.
Thus, at least one of the assumptions must not hold for this process.
In Sec.~\ref{sec:Results}, we will analyse our experimental data testing a Bell-like inequality to provide evidence in support of assumption I within the framework of GPTs. Consequently, either assumption II does not hold, assumption III does not hold, or both assumptions are invalid. On the basis of the data collected for the quantum-switch of system S1 (or S2) individually, we will show that it is not possible to describe our results by violating only assumption II. Thus, the only viable conclusion is that the order of operations applied on each system $\text{S}i$ is indefinite (i.e., that assumption III is necessarily false).

\section{Entangled quantum-switch}

To understand a single quantum-switch, first imagine two parties, Alice and Bob, who are in two \textit{closed laboratories}, i.e., their only interaction with the external environment is through input and output systems.
Suppose that each of the parties performs an operation on the same qubit (a ``target'' qubit), and that this qubit may be sent first to Alice and then to Bob, or vice versa.
Now, in a quantum-switch, one governs the order of the operations on the target qubit according to the state of a second quantum system, a ``control'' qubit. If the control qubit is placed in a superposition, this establishes a quantum-superposition of the order of the two operations. For instance, if the control qubit is in the state $\ket{0}^c$, the target qubit is sent first to Alice and then to Bob, and vice versa if the control qubit is in the state $\ket{1}^c$. When the control qubit is prepared in the state $\bigl({\ket{0}^c+\ket{1}^c}\bigr) /\ \sqrt{2}$, the resulting process has been shown to be causally non-separable within quantum mechanics~\cite{PhysRevA.88.022318, NewJournPhys.17.102001, PhysRevA.86.040301, Rubinoe1602589}.

Next, consider two quantum-switches (S1 and S2), each containing an Alice and a Bob.
S1 and S2 are prepared in a state where their control qubits are entangled, but their target qubits are in a product state (see Fig.~\ref{img:scheme_experiment}):
\begin{equation}
\ket{0}_1^t\otimes\ket{0}_2^t\otimes\left(\dfrac{\ket{0}_1^c\otimes\ket{0}_2^c - \ket{1}^c_1\otimes\ket{1}^c_2}{\sqrt{2}}\right).
\end{equation}
The superscripts $c$ and $t$ refer to the control and target qubits within one quantum-switch, respectively, while the subscripts $1$ and $2$ refer to quantum-switch S1 and S2.
Since we will attempt to observe a Bell violation with the target qubits, which are in a separable state, this initial condition satisfies assumption I in quantum theory.

\begin{figure}[t]
\centering
\includegraphics[width=\columnwidth]{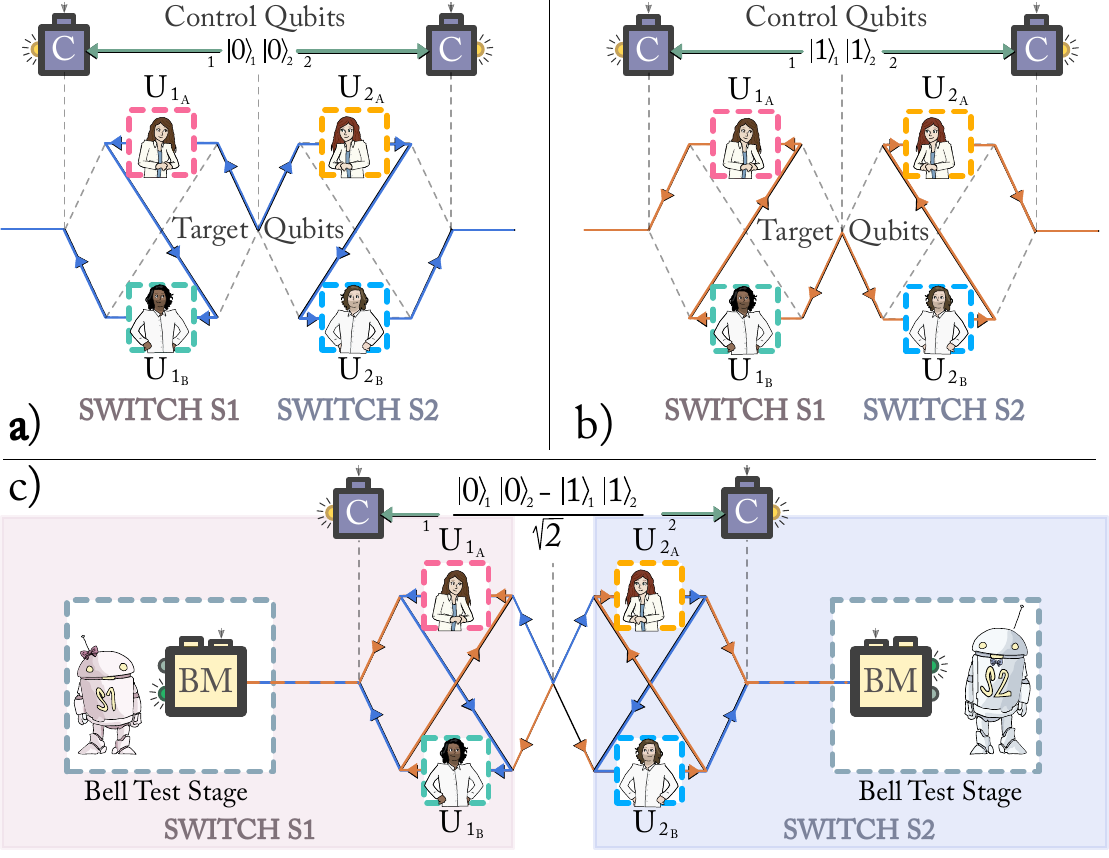}
\captionof{figure}{\footnotesize \textbf{Entangled quantum-switch}. 
Our work is based on two quantum-switches (S1 and S2).
In each quantum-switch, there are two parties, Alice (${U}_{i_\text{A}}$) and Bob (${U}_{i_\text{B}}$). A target qubit is first sent to one party, and then to the other.
The order in which the qubit is sent to the two parties is governed by the state of an additional qubit: if the state of the control qubit is $\ket{0}_i^c$, the target qubit is sent first to Alice and then Bob (Panel \textbf{a}), and vice versa if the control qubit is in the state $\ket{1}_i^c$ (Panel \textbf{b}). 
In our work, we entangle the control qubits (Panel \textbf{c}).  
In this case, the order in which the target qubit in quantum-switch S1 passes through ${U}_{1_\text{A}}$ and ${U}_{1_\text{B}}$ is entangled with the order in which the target qubit in quantum-switch S2 passes through ${U}_{2_\text{A}}$ and ${U}_{2_\text{B}}$.
The control qubits are measured in the basis $\lbrace\ket{+}_i^c, \ket{-}_i^c\rbrace$. 
If the orders inside the two quantum-switches are entangled, it will be possible to violate a Bell inequality by measuring the target qubits after the quantum-switches (BM).
This is possible even if the target qubits start in a separable state and only local operations are applied within each quantum-switch.
}
\label{img:scheme_experiment}
\end{figure}

Given this input state and the action of an individual quantum-switch, it is straightforward to calculate the output of the entangled quantum-switch system
\begin{align}
&\dfrac{1}{\sqrt{2}}\Bigl({U}_{1_\text{B}} {U}_{1_\text{A}}\ket{0}_1^{t}\Bigr)\otimes\ket{0}_1^{c}\otimes\Bigl({U}_{2_\text{B}} {U}_{2_\text{A}}\ket{0}_2^{t}\Bigr)\otimes\ket{0}_2^{c}\\
&- \dfrac{1}{\sqrt{2}}\Bigl({U}_{1_\text{A}} {U}_{1_\text{B}}\ket{0}_1^{t}\Bigr)\otimes\ket{1}_1^{c}\otimes\Bigl({U}_{2_\text{A}} {U}_{2_\text{B}}\ket{0}_2^{t}\Bigr)\otimes\ket{1}_2^{c},\notag
\end{align}
where ${U}_{i_\text{A}}$ and ${U}_{i_\text{B}}$ $(i=1,2)$ are the unitaries performed by the two parties Alice and Bob inside each quantum-switch S$i$.

Next, we measure the two control qubits in the basis $\{\ket{+}\bra{+}, \ket{-}\bra{-}\}$.
If we observe both of the control qubits in the same state (either $\ket{+}_1^c\ket{+}_2^c$ or $\ket{-}_1^c\ket{-}_2^c$), the target qubits will be in the (in general) unnormalised state
\begin{align}
&\dfrac{1}{\sqrt{2}}\bigl({U}_{1_\text{B}} {U}_{1_\text{A}}\ket{0}^{t}_1\otimes {U}_{2_\text{B}} {U}_{2_\text{A}}\ket{0}^{t}_2\notag\\
&\quad- {U}_{1_\text{A}} {U}_{1_\text{B}}\ket{0}^{t}_1\otimes {U}_{2_\text{A}} {U}_{2_\text{B}}\ket{0}^{t}_2 \bigr),
\end{align}
while, if we find the control qubits in orthogonal states (either $\ket{+}_1^c\ket{-}_2^c$ or $\ket{-}_1^c\ket{+}_2^c$), the sign between the two terms in the superposition in the equation above is ``+.''
In general, depending on the choice of the unitaries in the two quantum-switches, the target qubits will be left either in a separable or in an entangled state. In particular, if we choose the gates
\begin{subequations}
\begin{align}\label{eq:gates}
&{U}_{1_\text{A}} = {U}_{2_\text{A}} = \sigma_z, \\
&{U}_{1_\text{B}} = {U}_{2_\text{B}} = \dfrac{\mathbb{1}+i\sigma_x}{\sqrt{2}},
\end{align}
\end{subequations}
where $\sigma_x$ ans $\sigma_z$ are the Pauli operators, the state of the target qubits (upon finding the control qubits in $\ket{\pm}_1^c\ket{\pm}_2^c$) becomes
\begin{equation}\label{eq:entangledState}
\dfrac{1}{\sqrt{2}}\bigl(\ket{l}_1^{t} \ket{l}_2^{t} -\ket{r}_1^{t} \ket{r}_2^{t} \bigr),
\end{equation}
where $\ket{r} = \bigl(\ket{0}-i\ket{1}\bigr) /\ \sqrt{2}$ and $\ket{l} = \bigl(\ket{0}+i\ket{1}\bigr) /\ \sqrt{2}$ (analogously, one gets $\bigl(\ket{l}_1^{t} \ket{l}_2^{t} +\ket{r}_1^{t} \ket{r}_2^{t} \bigr)/{\sqrt{2}}$ when the control qubit is in $\ket{\pm}_1^c\ket{\mp}_2^c$.). This is a \textit{maximally entangled state} and, as a result, one can now violate a Bell inequality on the target qubits. 

Within quantum theory, the entanglement between the targets and the resulting violation of the Bell inequality can be explained in terms of the indefiniteness of the temporal orders in the two quantum-switches. In other words, such entanglement is not ``generated,'' but rather ``transferred'' from the control qubits by means of the indefinite temporal order of the unitaries applied. A related interpretation of the violation in quantum mechanics is in terms of \textit{time-delocalized quantum operations}~\cite{Oreshkov1801.07594v1} and \textit{causal reference frames}~\cite{Gu_rin_2018}, according to which a frame can be chosen such that while Bob's operation acts at a fixed time,  Alice's operation is in a superposition of being implemented before and after Bob's operation, thus resulting in an indefinite causal order between them.

In the class of GPTs considered here, the presence of non-classical correlations can be determined through a violation of a Bell inequality. In our case, the violation of a Bell inequality with the target subsystems implies the violation of the no-go theorem for temporal order, thereby proving that no underlying GPTs where assumptions I, II and III hold can explain the experimental data.
We will experimentally confirm that I holds both in quantum mechanics, and in our class of GPTs (as detailed in Appendix~\ref{Met-Sec:Ass1}). Then, we will show, both within quantum mechanics and in our class of GPTs, that one cannot describe our results if only assumption II is invalid. We will thus conclude that either assumption III is wrong or both assumptions II and III are false, hence proving the presence of indefinite causal order beyond the quantum framework.


\vspace{3mm}

\subsection{Experimental scheme}

We create a quantum-switch with entangled control qubits using a photonic set-up. Let us first consider a single quantum-switch. Each quantum-switch applies gates on a target qubit, where the gates' order depends on the state of a control qubit.  Experimentally, we encode the control qubit in a path degree of freedom (DOF), and the target qubit in the polarization DOF of a single photon. 
The photon is initially placed in a superposition of two paths (as explained in Fig.~\ref{img:AliceBob_setup} and Appendix~\ref{Met-Sec:Source}). These paths are labeled $0_1$ and $1_1$ for quantum-switch S1 and $0_2$ and $1_2$ for quantum-switch S2 in Fig.~\ref{img:AliceBob_setup}. The two paths are then routed through a two-loop \textit{Mach-Zehnder interferometer}~\cite{NatCommun.6, Rubinoe1602589}.
The $0_i$ paths lead the photons through a set of gates acting on the polarization DOF in the order ${U}_{i_\text{A}} \preceq {U}_{i_\text{B}}$. 
While the paths $1_i$ guide the photons through the gates in the opposite order ${U}_{i_\text{B}} \preceq {U}_{i_\text{A}}$. 
To generate the maximally entangled state between the target qubits in Eq.~\eqref{eq:entangledState}, we need to implement the non-commuting gates ${U}_{i_\text{A}}=\sigma_z$ and ${U}_{i_\text{B}}=(\mathbb{1} + i \sigma_x)/\sqrt{2}$, which we do with waveplates. In particular, a half-waveplate (HWP) at $0^\circ$ for $\sigma_z$ and a sequence of quarter-waveplate (QWP) and HWP both at $45^\circ$ for $(\mathbb{1} + i \sigma_x)/\sqrt{2}$). After this, the two paths are recombined on a 50/50 beamsplitter (BS) --- which projects the path DOF in the basis $\lbrace\ket{+}\bra{+}, \ket{-}\bra{-}\rbrace$. The path lengths and the relative phases are set by means of a piezo-driven trombone-arm delay line. At the two outputs of each interferometer, QWPs, HWPs and polarizing beam splitters (PBSs) are used to perform arbitrary polarization measurements on the target qubits.

To entangle the two quantum-switches, we first entangle the path DOFs of the two photons.  As explained in Appendix~\ref{Met-Sec:Source}, we generate path-entangled photon pairs that are separable in their polarization DOF:
\begin{align}\label{eq:input}
&\ket{\Phi^{-}}_{1,2}^{\text{path}}\otimes(\ket{H}_1\ket{H}_2)^{\text{polar.}} =\notag\\
&\left(\dfrac{\ket{0}_1\ket{0}_2-\ket{1}_1\ket{1}_2}{\sqrt{2}}\right)^{\text{path}}\otimes(\ket{H}_1\ket{H}_2)^{\text{polar.}}.
\end{align}
Each photon is thus delocalized over two paths. The two photons are then sent to their respective quantum-switches, and, since the control qubits began in an entangled state, the order in which the gates act on the two target qubits becomes entangled.

\begin{figure*}[t]
\centering
\includegraphics[width=\textwidth]{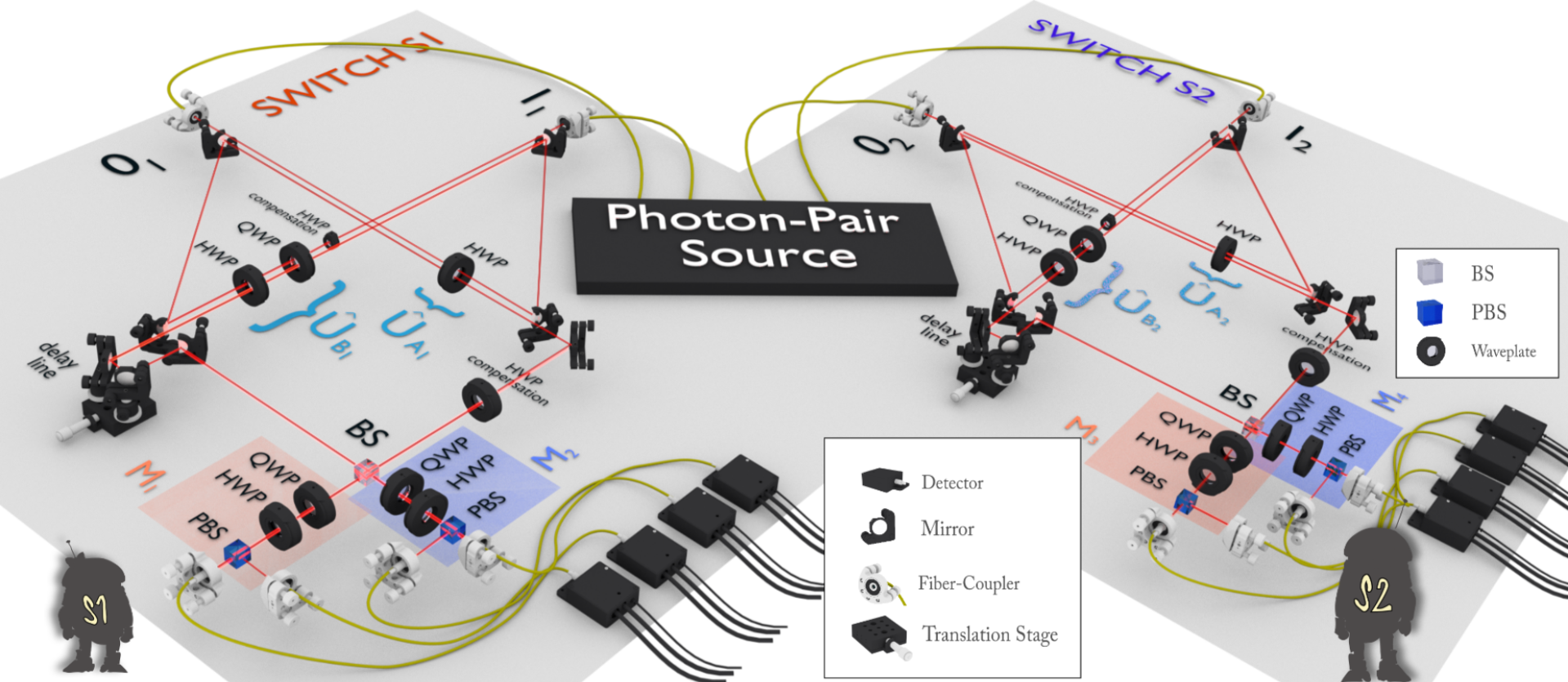}
\captionof{figure}{\footnotesize \textbf{Experimental implementation of an entangled quantum-switch.} 
Each quantum-switch is composed of a two-loop Mach-Zehnder interferometer.  The interferometers start in the photon-pair source, wherein photon 1 and photon 2 are placed in superposition of the paths $0_1$ and $1_1$, and $0_2$ and $1_2$, respectively (see Appendix~\ref{Met-Sec:Source}).
(For simplicity, we have drawn these paths as fibers, however the photons are transmitted via free-space from the source to the experiment.)
These paths are routed such that path $0_i$ sees gate ${U}_{i_\text{A}}$ and then gate ${U}_{i_\text{B}}$, and vice versa for the path $1_i$.  Each gate, acting on the polarization degree of freedom, is made up of waveplates (as described in the main text).
The paths $0_i$ and $1_i$ are then combined on a beam splitter (BS).
In quantum-switch S1 (S2), the photon is detected after the polarization measurement at $M_1$ or $M_2$ ($M_3$ or $M_4$). Both quantum-switches contain two compensation half-waveplate at the beginning and at the end of the reflected arm, so as to compensate for the phase shifts due to the reflection from the polarizing beam splitters composing the source (shown in Fig.~\ref{img:source}).
Together with the BS (which applies a Hadamard gate to the qubit encoded in the path DOF), detecting the photon at $M_1$ or $M_2$ ($M_3$ or $M_4$) projects the path qubit on $\ket{+}$ or $\ket{-}$, respectively.
Furthermore, within each measurement $M_i$, the polarization qubit can be measured in any basis by a combination of a quarter-waveplate (QWP), half-waveplate (HWP), and polarizing beam splitter (PBS). 
}
\label{img:AliceBob_setup}
\end{figure*}

Through this scheme, we engineered a situation wherein the only way entanglement can be transferred from one pair of systems to another is by means of causally non-separable processes. In our experiment, this transfer takes place between different DOFs of photon pairs.  Although it is often easy to transfer the entanglement from one DOF to another, this is typically done with a device that directly couples the two DOFs; e.g., in the case of path-polarization transfer, a PBS could be used. In our experiment, we used an entangled quantum-switch to accomplish this interchange.  Our quantum-switches do not contain any device which directly couples these DOFs (only waveplates, which act solely on the polarization state, and 50/50 BSs, which act solely on the path state).  Rather, here the interchange occurs because the control qubit (the path) governs the order of the application of gates on the target qubit (the polarization). Then, since we begin with an entangled state of the control qubits, this state is transferred to the target qubits via an indefinite order of the application of the gates. In other words, by choosing a specific set of operations, the temporal superposition of the application of these operations is mapped onto a superposition of orthogonal states. As a result, this transfer of entanglement is the signature of an indefinite temporal order.

\vspace{5mm}
\section{Results}
\label{sec:Results}

Our goal is to demonstrate that the order of application of the gates within the two quantum-switches is genuinely indefinite without  assuming that the laboratory operations and the states of the systems are described by quantum theory. We can arrive at this conclusion in three steps. 
We will first show experimental data that violate a Bell inequality. 
From this we can assert that \textit{at least one} of the three assumptions must be false. We will then prove that assumption I is satisfied in our experiment using a class of GPTs. Thus, one of the remaining assumptions (i.e., assumptions II and/or III) must not hold. We will analyse the case in which only assumption II does not hold within the set of GPTs. By acquiring additional data on a single quantum-switch, we will show that such scenario cannot reproduce the results of our experiment. Consequently, the only two possible explanations are that either assumption III does not hold, or that both assumptions II and III are false. In either case, assumption III must be false, and therefore the local operations within the two quantum-switches have been applied in an indefinite temporal order.




\subsection{Violation of the no-go theorem}

We begin by performing a Bell test between the target states at the output of the apparatus. This allows us to experimentally probe a conjunction of all three assumptions. 
We realize the Bell test (more specifically, we measure a Clauser-Horne-Shimony-Holt (CHSH) inequality~\cite{PhysRevLett.23.880}) on the polarization DOF, using four equivalent measurement set-ups (orange and blue boxes in Fig.~\ref{img:AliceBob_setup}).
Since the 50/50 BSs apply a Hadamard gate on the path qubits, we post-select the control qubits in the same state (either $\ket{+}_1^c\ket{+}_2^c$ or $\ket{-}_1^c\ket{-}_2^c$) by grouping the results of $M_1$ with $M_3$ (orange boxes) and $M_2$ with $M_4$ (blue boxes). We obtain $S_{\text{target}} = 2.55\pm 0.08$. This violates the inequality, and thus also the no-go theorem, by almost 7 standard deviations. Therefore, in our class of GPTs, no theory satisfying assumptions I, II and III is compatible with the experimental data.

\subsection{Verification of assumption I}

We now proceed to test the validity of assumption I, which says that the joint target state (shared between system S1 and S2) is Bell-local. We will show experimentally that, within the limits of our experimental precision, the target systems are compatible with a product state in the class of GPTs under consideration, and they are therefore Bell-local.

To test assumption I within a class of GPT, we assume that the set of ``fiducial measurements'' of the class of GPTs contains ``quantum fiducial measurements'' as a subset. Moreover, we assume that  pure states in quantum theory are also pure states in GPTs.
In this sense, quantum theory is embeddable in the GPT.
This is similar to how classical theory can be embedded in quantum theory (i.e., classical theory has one fiducial measurement in the ``computational basis''). In particular, we consider a class of GPTs wherein the state space of a single two-level system is described by a $d$-dimensional Bloch ball~\cite{GPT3, Masanes2011} (i.e., there are three quantum fiducial measurements, and $d-3$ non-quantum fiducial measurements), with $d > 3$ in general.
The methodological advantage of considering such theories is that, in certain cases, the structure of the state can be inferred despite the fact that only quantum measurements can be made (i.e., only measurements in the three-dimensional quantum subspace).
For this class of theories, it was shown that a single system is in a pure state if there exists a measurement for which the system returns a given result with probability one. Similarly, a bipartite system is in a pure product state, if the above statement applies to each individual system. In more detail, in GPTs a state is pure if it cannot be written as a non-trivial mixture of other states. Moreover, a bipartite system is in a product state if, for all local measurements, the probabilities for outcome pairs on a bipartite-state are equal to the product of the two marginal probabilities of each subsystem. Such a state has perfect correlations only for a pair of fiducial measurements, it exhibits no further correlations in any other pair of fiducial measurements, and it cannot violate a Bell inequality~\cite{GPT1, GPT2, GPT3, 1367-2630-13-6-063001}.


For our target photon pair we demonstrated that both photons return value $H$ with certainty. This means that, already from a pair of quantum fiducial measurements, one can conclude that, up to experimental imperfections, the state is a pure product state, and therefore it cannot violate a Bell inequality. This  supports the validity of assumption I in the special case of Bloch-vector theories (see Appendix~\ref{Met-Sec:ProofTheorem}). 
In Table S1, we compare the probabilities for outcome pairs on a bipartite-state to the product of the two marginal probabilities of each subsystem.
The excellent agreement between the two probability distributions indicates that the joint target state is indeed a pure product state, and cannot violate a Bell inequality. This proves that assumption I of our no-go theorem holds for the class of GPTs under consideration.
We quantify to what extent the two distributions agree by calculating the \textit{root-mean-square} (RMS) difference between the two distributions, resulting in an average difference of $0.6 \cdot 10^{- 2} \pm 2.7 \cdot 10^{- 2}$. Although this value is consistent with zero, one could imagine that this small difference is in fact caused by correlations between the two target systems. In Appendix~\ref{Met-Sec:Ass1} we show, however,  that such small correlations can only give rise to a vanishingly small violation of Bell's inequality. The possible level of violation from this amount of potential coupling is insufficient to explain our  experimentally observed violation.
Therefore, we have confirmed that the joint target system starts in an (approximately) separable state. Additionally, in Appendix~\ref{Met-Sec:Source}, we experimentally show that the joint control system is initially entangled.  
We then send this joint state into our two quantum-switches and perform measurements on the output state. Furthermore, Appendix~\ref{Met-Sec:within_QM} reports the results of the same studies presented above when a quantum-mechanical description of the experimental set-up is assumed.

Having proven, both in quantum theory and within the class of GPTs, that our no-go theorem is violated and that assumption I is justified, we can conclude that either assumption II, or assumption III, or both must be false. We will now consider the case in which only assumption II is false.

\subsection{Verification of assumption II}

The second assumption of our no-go theorem says that the laboratory operations performed in the two quantum-switches cannot transform the joint state of the target systems of S$1$ and S$2$ from a local state to a non-local one. We will again experimentally prove a stronger condition, namely that laboratory operations are local maps on the targets in the considered GPTs. As discussed earlier, this condition can be violated for two reasons.
First, the laboratory operations (IIa) may  induce a direct interaction between two targets.
To avoid this, one could perform the operations with a space-like separation in which case the condition would be guaranteed in any theory obeying relativistic locality. However, in our experiment, we make the (well-justified) device-dependent assumption that the laboratory operations are local transformations within S$1$ and S$2$ in GPTs, since the transformations of the systems take place at spatially separated parts of the optical table. 
As a consequence, the first form of violation (IIa) of assumption II arguably does not take place in our experiment.

Let us consider now the second, more substantial, way in which assumption II could be violated.
Since we are analysing the case in which only assumption II is false while assumption III holds, we will now study the scenario in which the laboratory operations occur only in a causally-ordered manner.

The second form of violation of assumption II is that the laboratory operations of one party S$i$ could ``couple'' the control and the target systems (IIb). 
Such a coupling would make it possible to transfer non-local correlations from the control systems of the two parties S$1$ and S$2$ to their target systems, and therefore a violation of Bell's inequalities would be possible.

We can experimentally prove that, within the class of GPTs considered, the laboratory operations transform product states of the control and the target subsystems into product states within each quantum-switch. We will do so using a similar technique as for assumption I.
We start by placing bounds on the degree of coupling between the target and the control qubits within a single quantum-switch in the presence of \textit{only one} of the two operations (i.e., only $U_{A}$ or only $U_{B}$) inside the quantum-switch. 
We perform the full set of quantum fiducial measurements. With this, we show that the joint probabilities of the target and the control subsystems are factorizable into the products of the two marginal probabilities of each subsystem in the case where either only $U_{A}$ or $U_{B}$ is inserted inside a single quantum-switch. From this we can conclude that the joint probabilities must also remain factorizable even when both $U_{A}$ and $U_{B}$ are inserted if the order is well-defined (or a classical mixture of the two well-defined orders)~\cite{Ftn1}. In other words, in our GPTs, the optical elements do not couple the control and the target subsystem in the 
``quantum subspace'' of the GPT state space. Moreover, the marginal probabilities measured on the control and the target subsystem correspond to a pure (product) state.
From this we also conclude that there can be no coupling in the ``non-quantum subspace'' either. 
We analyse this by performing a set of measurements on the joint control-target system, and by showing that the joint probabilities can be described by the product of the marginal probabilities (see Tables S2-S3, and Appendix~\ref{Met-Sec:Ass2} for more details).
The RMS difference between the two distributions is, on average, $0.02 \pm 0.03$. 
This value is within one standard deviation of zero, confirming that the probability distribution is consistent with that of a product state. As we discussed for assumption 1, this small discrepancy could be caused by correlations between the control and the target systems.
However, as we show in Appendix~\ref{Met-Sec:Ass2}, these correlations are too weak to explain our experimentally observed violation of Bell's inequality.

\subsection{Implications for the temporal order}

In the previous subsection we proved that the laboratory operations do not couple the control and target systems when they are applied in a well-defined order. 
This means that, for our experiment, whenever assumption III holds, assumption II must also hold.
Because the statement $a \Rightarrow b$ is logically equivalent to $\text{\textit{not} } b \Rightarrow \text{\textit{not} } a$, this further implies that if assumption II is invalid, then assumption III must also be invalid.
It follows that the only two possible scenarios are that (1) assumption II is wrong, and therefore III is also wrong, or (2) assumption III is false, independently of assumption II. In either case, it is not possible to explain our experimental data unless assumption III is discarded. We thus conclude that the local operations in our experiment were applied in an indefinite order.
(Notice that it is not a logical necessity that every time III holds, II must hold in turn. In general, one could find a system with a defined causal order (III) in which the operations are non-local (not-II). This would be the case, for instance, of circuits where control and target systems undergo entangling gates in a definite causal order (e.g., CNOT). Nevertheless, this was shown not to be the case in our experiment, as operations $U_A$ and $U_B$ were proven not not couple control and target systems when applied individually.)

\section{Discussion}

In this work, we entangled the temporal orders between operations applied by two parties and experimentally showed that the resulting temporal order is indefinite, 
by violating a Bell inequality using the joint target system after the quantum-switches. We thus verified that the data collected by entangling temporal orders in the quantum-switches cannot be described by a class of (generalized probabilistic) theories under the assumption that the initial joint target state is Bell-local, the operations on the target states are local, and they have a pre-defined order. This did not require the assumption that the systems and the operations are described by the quantum formalism. 
Clearly, for our demonstration to be loophole-free (as proposed in Ref.~\cite{mag}), the standard Bell loopholes (fair-sampling and locality) would need to be closed.
Further loopholes can arise related to the implementation of the quantum-switch. In fact, it is known that experimental data produced by the quantum-switch can be simulated by a causally-separable process if at least one of the operations (either $A$ or $B$) is performed two or more times. In relation to the experimental implementations of the quantum-switch, this is the so-called ``single-usage loophole.'' Closing this loophole would require an operational verification that each operation in the quantum-switch is performed only once. 
For example, this could include implementation of a ``counter'' that would estimate the number of times an operation is performed, or a process tomography on time-delocalized quantum systems~\cite{Oreshkov1801.07594v1}. 
However, our experiment is immune to the single-usage loophole. Indeed, a multiple usage of operations in a defined causal order on either side of the Bell test cannot be simultaneously shown to be local and violate the Bell inequality for temporal order.


All previous studies involving quantum processes with indefinite temporal order achieved their goal by superposing the order of operations, rather than entangling them.
The first proposal to entangle the temporal order was made only recently~\cite{mag}.
Here we show that the basis of this theoretical concept is in fact experimentally accessible.
Moreover, we exploit this resource as a new means to validate indefinite causal structures. Techniques to characterize these structures are becoming increasingly relevant, as it is known that these processes can lead to logarithmic~\cite{MartinRenner:2021} advantages in query complexity, and exponential advantages in quantum communication tasks~\cite{PhysRevA.88.022318, PhysRevLett.113.250402,PhysRevA.92.052326, PhysRevLett.117.100502}.

\vspace{2mm}
\textbf{Acknowledgments:}
We thank C. Branciard, F. Costa, B. Daki\'c, M. Jacquet, A. Moqanaki and T. Str\"omberg for useful discussions. \textbf{Funding:} acknowledges financial support from the Royal Society through the Newton International Fellowship No.\ NIF$\backslash \text{R1}\backslash$202512. L.A.R. acknowledges support from the Templeton World Charity Foundation (fellowship no. TWCF0194). M.A. acknowledges support from the
Excellence Initiative of the German Federal and State Governments (Grant
ZUK 81). M.Z.~acknowledges support through an ARC DECRA grant DE180101443, and ARC Centre EQuS (CE170100009). \v C.B. acknowledges support from the John Templeton Foundation, Foundational Questions Institute (FQXi), Austrian Science Fund (FWF) through BeyondC (F7103-N38), the projects no. I-2562-N27 and I-2906, as well as support from the European Commission via Testing  the  Large-Scale  Limit  of  Quantum  Mechanics (TEQ) (No. 766900) project. \v C.B. and P.W. acknowledge support from the Austrian Science Fund (FWF) via Doctoral Programme CoQuS (no. W1210-4), and the research platform TURIS. P.W. also acknowledges support from the Austrian Science Fund (FWF) through BeyondC (F7113-N38), the GIPSS (P30817-N36) and NaMuG (P30067-N36), United States Air Force Office of Scientific Research via QAT4SECOMP (FA2386-17-1-4011) and Red Bull GmbH. \textbf{Author contributions:} G.R., L.A.R., \v C.B. and P.W. designed the experiment. G.R. built the set-up and carried out data collection. G.R. and L.A.R. performed data analysis. F.M. built the single-photon source. M.Z., M.A. and \v C.B. developed the theoretical idea. L.A.R., P.W. and \v C.B. supervised the project. All authors contributed to writing the paper.
\textbf{Data and materials availability:} All data needed to evaluate the conclusions in the paper are present in the paper and/or the Supplementary Information. Additional data related to this paper may be requested from the authors.

\appendix
\section{Proof of no-go theorem for temporal order}
\label{Met-Sec:ProofTheorem}

All previous experimental studies of causally non-separable processes~\cite{NatCommun.6, Rubinoe1602589} were dependent on the validity of the quantum theory (i.e., they were \textit{theory-dependent}), and all known physically realizable processes satisfy all causal inequalities (see the Suppl. Information)~\cite{NewJournPhys.17.102001, Oreshkov_2016NJP}.
The latter means that experimental data taken from a given causally non-separable quantum process could still be understood as arising in causal manner, for example from a process with a definite causal order in an underlying generalized probabilistic theory (GPT). Therefore, it is unknown whether a fully theory-independent experimental proof of indefinite causal order is possible. 

In our current work, we relate a violation of a Bell inequality to the violation of a no-go theorem for temporal order, as proposed in Ref.~\cite{mag}. This results in a proof of causal indefiniteness outside of the quantum framework as it is valid for a large class of generalized probabilistic theories.
In this section, we provide a rigorous introduction to such no-go theorem for temporal order.


We will begin by giving a brief introduction to the basic elements of GPTs which are necessary for our no-go theorem. A more detailed discussion of the GPT framework can be found in Ref.~\cite{GPT2, 1367-2630-13-6-063001, PhysRevA.75.032304}.

In a GPT, a system is described by a state $\omega$ that specifies outcome probabilities for all measurements that can be performed on it. A complete representation of the state is given by specifying the outcome probabilities of a so-called ``fiducial set.'' The smallest such set defines the number $d$ of degrees of freedom of the system. We restrict our consideration here to binary systems that have two perfectly distinguishable states and no more. For example, the fiducial set for a two-level system in quantum theory consists of the (three) probability outcomes of spin projections along $x$, $y$ and $z$. The state space is a compact and convex set $\Omega$ embedded in a vector space. The extremal states of  $\Omega$ that cannot be decomposed as a convex mixture of other states are called ``pure states.''  An effect $e$ is defined as a linear functional on $\Omega$ that maps each state onto a probability, i.e., $e : \Omega \rightarrow [0, 1]$, where $e(\omega)$ is the probability to obtain an outcome on the state $\omega$.  The linearity is required to preserve the convex structure of the state space.  

A transformation $U$ is a linear map from a state to a state, i.e., $U: \Omega \rightarrow \Omega$. The transformation is linear for the same reason that probabilities have to be linear maps of states. The sequence of transformations $U_1$, ... , $U_n$, in which transformation  $U_1$ ``precedes'' transformation $U_2$, which ``precedes'' $U_3$, \textit{etc.}, is represented by a composition of maps: $U_n \circ ... \circ  U_1$. This defines a {\em definite order of transformations}, which we denote as $U_1 \preceq ... \preceq U_n$.


We will now introduce a generalization of the no-go theorem for temporal order, which was originally proposed in Ref.~\cite{mag}.



In the framework of a GPT, the state of a composite system shared between two parties S1 and S2 is given by $\omega_{1,2} \in \Omega_{1,2}$, where $\Omega_{1,2}$ is the state space of a composite system. The state of a composite system is given by a multiplet consisting of the {\em local states} $\omega_{1} \in \Omega_1$ and $\omega_{2} \in \Omega_2$ of individual systems, the \textit{correlation tensor} $\hat{T}$ and a potential \textit{global parameter} $\xi$~\cite{GPT1, GPT2, GPT3, 1367-2630-13-6-063001}:
\begin{equation}
\omega_{1,2}=\omega_{1,2}(\omega_{1},\omega_{2},\hat{T},\xi).
\label{eqn:omega_12}
\end{equation} The fact that subsystems are themselves systems implies that each has a well-defined reduced state $\omega_1$, $\omega_2$ which does not depend on  which  transformations  and  measurements  are  performed  on  the  other  subsystem; this is often referred to as ``no-signaling.'' We also assume that transformations  and  measurements performed on subsystems commute with each other, so that \textit{one} correlation tensor is enough to describe correlations between them. If this were not the case, we would need to introduce \textit{two} correlation tensors, one when S1 applies operations before S2, and the other when S2 performs operations before S1. Finally, the states in GPT need not to satisfy the \textit{local tomography} condition (stating that reduced states and correlation tensor completely describe the systems' state), but may include a global parameter $\xi$.

For the present case of binary systems, the components of the state in Eq.~\eqref{eqn:omega_12} are given by
\begin{subequations}
\begin{align}
\omega^{(i)}_1 &= p^{(i)} (o_1=1) - p^{(i)}(o_1=-1) ,\\
\omega^{(j)}_2 &= p^{(j)} (o_2=1) - p^{(j)}(o_2=-1) ,\\
T^{(i,j)} &= p^{(i,j)} (o_1 \, o_2=1) - p^{(i,j)}(o_1 \, o_2=-1) ,
\end{align}\end{subequations}
where $i,j=1,...,d$. Here, for example, $p^{(i)} (o_1=1)$ is the probability to obtain the outcome $o_1 = 1$ when the $i$-th measurement is performed on the first subsystem, and $p^{(i,j)}(o_1 \, o_2 = 1)$ is the joint probability to obtain correlated results (i.e., either $o_1 = o_2 = +1$ or $o_1 = o_2 = - 1$) when the $i$-th measurement is performed on the first subsystem and the $j$-th measurement on
the second one. 

An effect $e_{12}$ that maps a state onto a probability for a pair of \textit{local} measurements is given
by $e_{12} = e_{12} (r_1,r_2,r_1r^T_2)$, where $r_i$ is the effect on the state of \textit{i}-th system, and $r^T$ denotes transposition of $r$. (Note that the global parameter does not contribute to the probability for a pair of local measurements). The probability to obtain the effect $e_{12}$ when the system is prepared in the state $\omega_{12}$ is given by 
\begin{align}
p(e_{12}|\omega_{12}) =& \frac{1}{4} \Bigl[1 + (\omega_1 \cdot r_1) \notag \\
&+  (\omega_2 \cdot r_2) + (r_2 \cdot \hat{T} r_1) \Bigr], 
\end{align}
where $(x \cdot y)$ is the Euclidean scalar product between two $d$-dimensional real vectors $x$ and $y$. 

The product state is represented by $\omega_p= \omega_p(\eta_1, \eta_2, \eta_1\eta^T_2, \xi_p)$, where the correlation tensor is of a product form. If we perform a pair of local measurements on the arbitrary product state, the outcome probability factorizes into the product of the local outcome probabilities.

We next introduce a pair of {\em local (reversible) transformations} $(U_1,U_2) : \Omega_{12}  \rightarrow \Omega_{12}$  as a linear map from the space of states of a composite system to itself:
\begin{equation}
(U_1,U_2)(\omega_{12} )= (U_1\omega_1,U_2\omega_2,U_1 \hat{T} U^T_2,\xi'),
\label{transfeq}
\end{equation}
where the global parameter $\xi'$ is, in general, changed under the transformations $(U_1,U_2)$. Since testing our Bell inequality involves only local transformations and measurements, it is sufficient to specify effects for those measurements.




In our experiment,  $\omega_1$ and $\omega_2$ themselves are states of composite systems each consisting of a ``control'' and a ``target'' subsystem. Thus, the entire system under investigation consists of four subsystems, a control and a target subsystems of S$1$ and a control and a target subsystems of S$2$. The overall state is 
\begin{align}
\omega_{1,2,3,4}=\omega_{1,2,3,4} (&\omega^t_1,\omega^c_1,\omega^t_2,\omega^c_2, ...,\notag\\
&\hat{T}^{ij}, ..., \hat{T}^{ijk}, ..., \hat{T}^{1234}, \Xi),
\end{align}
where $c$ and $t$ refer to the terms ``control'' and ``target'' subsystems, $\hat{T}^{ij}$, $\hat{T}^{ijk}$ and $\hat{T}^{1234}$ are correlation (sub)tensors describing correlations between pairs $\{i,j\}$, triple $\{i,j,k\}$ and quadruple $\{1,2,3,4\}$ of subsystems, respectively, and $\Xi$ is the set of all global parameters. 



The no-go theorem concerns the reduced state of the two target systems as given by
\begin{equation}
\label{eqn:omega12}
\omega^t_{1,2} = \omega^t_{1,2} (\omega^t_1, \omega^t_2, \hat{T}^{tt}, \xi^t),
\end{equation}
where $\omega^t_1$ and $\omega^t_2$ are states of the target subsystems of S1 and S2,  $\hat{T}^{tt}$ is their correlation tensor, and $\xi^t$ is the corresponding global parameter. 

Leveraging these definitions, we now present three assumptions, which are the fulcrum of our no-go theorem for a definite local causal order.

\vspace{2mm}

\textbf{1. The initial joint state of the target system $\omega_{1,2}^t$ is Bell-local (i.e., it satisfies Bell's local-causality condition for all pairs of measurements)}

Suppose that the two observers can each perform a measurement $\mathcal{O}_1$ and $\mathcal{O}_2$, respectively. We label $m_1$ and $m_2$ as arbitrary measurement choices of S1 and S2, and $o_1$ and $o_2$ as the corresponding outcomes. Under these conditions, we suppose that probabilities obtained from any such measurements can be described through a local hidden variables theory (i.e., in Bell's terms, a theory that satisfy ``local causality''), and therefore it is associated to the probability distribution
\begin{align}
\label{eqn:p}
&p(o_1,o_2|m_1,m_2,\omega^t_{1,2}) = \\
&\int \rho(\lambda) \; p(o_1|m_1,\lambda,\omega^t_{1,2}) \; p(o_2|m_2,\lambda,\omega^t_{1,2}) \; d\lambda, \notag
\end{align}
where $\lambda$ is often referred to as a ``hidden variable.'' We implicitly assume the ``freedom of choice'' condition --- the assumption that the choices of the measurement settings are independent of $\lambda$ --- is fulfilled.

\vspace{2mm}
\textbf{2. The laboratory operations are represented by local transformations $U_i^t$ on the target subsystems. If they are applied to a Bell-local state of the targets, the state remains Bell-local.}

This is satisfied by definition, since in the GPTs the action of a transformation  followed by a (fixed) measurement can be understood as another measurement, and the initial state is assumed to satisfy Bell's condition of local causality for {\it all} pairs of local measurements. Therefore, the action of a pair of local transformations cannot produce a non-local Bell state from a local state. 

\vspace{2mm}
\textbf{3. The order of S1's and S2's operations on the target system is well defined.}

Suppose first that the order of application of the local operations performed inside quantum-switch S1 (${U}_{1_A}^t \preceq {U}_{1_B}^t \preceq \dots$) and those performed inside quantum-switch S2 ($\, {U}_{2_A}^t \preceq {U}_{2_B}^t \preceq \dots$) are fixed. 
Since an ordered sequence of local transformations is still a local transformation, if a Bell-local state undergoes such a transformation on S1's and S2's sides, it remains Bell-local. The state remains Bell-local even if the order of operations is chosen with a given probability distribution due to convexity. The mutual order between S1's and S2's operations is irrelevant, since we have assumed the two classes of operations to commute.
\vspace{2mm}

\begin{theorem}
\textit{No states, set of transformations and measurements which obey the assumptions I-III can result in violation of a Bell inequality.}
\end{theorem}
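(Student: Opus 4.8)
The plan is to assemble, into a single chain at the level of the joint target state $\omega^T_{1,2}$, the observations already recorded alongside assumptions \textbf{1.}--\textbf{3.}; the control systems play no role in the argument itself, since all three hypotheses are phrased purely in terms of $\omega^T_{1,2}$ and the local transformations $U^T_i$ acting on it. First I would treat the case of a single fixed order. By assumption \textbf{3.}, the net action of the laboratory operations on $\omega^T_{1,2}$ is a composition $V=(V_1,V_2)$ with $V_1 = U^T_{1_{k_1}} \circ \dots \circ U^T_{1_1}$ and $V_2 = U^T_{2_{k_2}} \circ \dots \circ U^T_{2_1}$, where the interleaving of S1's and S2's operations is immaterial because the two families are assumed to commute. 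Each factor $U^T_{i_j}$ is a local transformation of the form~(\ref{transfeq}) by assumption \textbf{2.}, and the set of such local transformations is closed under composition; hence $V$ is itself a local transformation, sending $\omega^T_{1,2}(\omega^T_1,\omega^T_2,\hat{T}^{TT},\xi^T)$ to $(V_1\omega^T_1, V_2\omega^T_2, V_1 \hat{T}^{TT} V^T_2, \xi')$.

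Next I would bound the Bell violation of $V(\omega^T_{1,2})$. By assumption \textbf{2.} a local transformation cannot raise the amount of violation of any Bell inequality: for the CHSH case this is the statement that $\sqrt{t_1^2+t_2^2}$, built from the two largest singular values of the correlation tensor, cannot grow when $\hat{T}^{TT} \mapsto V_1 \hat{T}^{TT} V^T_2$ (singular values are invariant under the orthogonal part of a reversible local map and can only shrink under contractive ones), and the analogous monotonicity holds for every Bell functional because in this class of GPTs the ``amount of violation'' is by construction the maximum over exactly such local transformations and their convex mixtures. Combined with assumption \textbf{1.}, which endows $\omega^T_{1,2}$ with a local hidden variable model (Eq.~(\ref{eqn:p})) and therefore excludes any Bell violation for $\omega^T_{1,2}$ itself, this shows that $V(\omega^T_{1,2})$ violates no Bell inequality either.

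Finally I would dispose of the mixed case and the readout. If the order is not a single fixed sequence but is drawn from a probability distribution $\{q_\alpha\}$ over fixed orders --- the most general situation permitted by assumption \textbf{3.} --- then the output target state is the convex combination $\sum_\alpha q_\alpha\, V^{(\alpha)}(\omega^T_{1,2})$ of the states handled above; since each summand admits a local hidden variable model, absorbing the label $\alpha$ (occurring with probability $q_\alpha$) into the hidden variable $\lambda$ in Eq.~(\ref{eqn:p}) yields a local hidden variable model for the mixture as well. The Bell test itself is carried out with the local measurements $\mathcal{O}_1,\mathcal{O}_2$, whose joint statistics are obtained by evaluating a product effect $e_{1,2}(r_1,r_2,r_1r^T_2)$ on this Bell-local output state, so the resulting correlations satisfy every Bell inequality, which is the claim. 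The only point requiring genuine care --- rather than bookkeeping --- is the monotonicity invoked above: one must verify that it is not circular, i.e.\ that the laboratory operations genuinely lie in the class~(\ref{transfeq}) (this is where sub-assumption (2b), the non-coupling of control and target, does the work in the experiment) and that the CHSH singular-value bound together with its convexity extension covers the Bell functional actually measured; modulo this, the theorem follows directly from assumptions \textbf{1.}--\textbf{3.}.
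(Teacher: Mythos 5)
Your proposal is correct and follows essentially the same route as the paper's proof: compose the fixed-order local operations into a single local transformation of the form~(\ref{transfeq}), invoke assumption \textbf{2.} to conclude the output target state remains Bell-local given assumption \textbf{1.}, and handle a probabilistically chosen order by absorbing the order label into the hidden variable (the paper calls it $\nu$). Your added remarks on the readout via product effects and on the non-circularity of the monotonicity claim are sensible elaborations but do not change the argument.
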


\begin{proof}
Following I, suppose that the initial target state $\omega_{1,2}^t$ is Bell-local. This means that Eq.~\eqref{eqn:p} is fulfilled for an arbitrary pair of local measurements.
Because of III, operations in S1's and in S2's laboratories are applied in a definite order, say ${U}_{1_A}^t \preceq {U}_{1_B}^t \preceq \dots$ in S1's side, and ${U}_{2_A}^t \preceq {U}_{2_B}^t \preceq \dots$ in S2's side. The state evolves, therefore, under a composition of the local operations as
\begin{equation*}
\dots \; ({U}_{1_{\text{B}}}^t, {U}_{2_{\text{B}}}^t) \circ ({U}_{1_{\text{A}}}^t, {U}_{2_{\text{A}}}^t)(\omega^t_{1,2}).
\end{equation*}
Let us restrict ourselves to the case of only two transformations per quantum-switch (${U}_{\text{A}}^t$ and ${U}_{\text{B}}^t$). After the pairs of operations are applied in order ${U}_{1_{\text{A}}}^t \preceq {U}_{1_{\text{B}}}^t$ and ${U}_{2_{\text{A}}}^t \preceq {U}_{2_{\text{B}}}^t$ on the two sides, the state becomes
\begin{align}
\omega^t_{1,2}\,' &= ({U}_{1_{\text{B}}}^t, {U}_{2_{\text{B}}}^t) \circ ({U}_{1_{\text{A}}}^t, {U}_{2_{\text{A}}}^t)(\omega^t_{1,2}) \notag\\
&= \bigl({U}_{1_{\text{B}}}^t \circ {U}_{1_{\text{A}}}^t, {U}_{2_{\text{B}}}^t\circ {U}_{2_{\text{A}}}^t\bigr)(\omega^t_{1,2})
\end{align}
which is still local due to I - III. Hence
\begin{align}
&p(o_1,o_2|m_1,m_2,\omega^t_{1,2}\,') =\\
&\int \rho(\lambda) \; p(o_1|m_1,\lambda,\omega^t_{1,2}\,') \cdot p(o_2|m_2,\lambda,\omega^t_{1,2}\,') \; d\lambda . \notag
\end{align}
In general, the order of operations does not need to be fixed, but can be specified probabilistically by a further hidden variable $\nu$, whose different values correspond to different permutations of the order of operations. We obtain
\begin{align}
&p(o_1,o_2|m_1,m_2) =  \iint \rho(\lambda,\nu) \; p(o_1|m_1,\lambda,\omega^{t,\nu}_{1,2}) \notag \\
& \cdot p(o_2|m_2,\lambda,\omega^{t,\nu}_{1,2}) \; d\lambda \; d\nu, 
\end{align}
where $\rho(\lambda,\nu) $ is the joint probability distribution over the two types of variables, and $\omega^{t,\nu}_{1,2}$ is the final state of the target systems upon application of the transformations in the order given by $\nu$.

Thus, we conclude that a local target state subjected to the action of a set of local operations applied in a pre-defined order can by no means lead to the violation of Bell inequalities, even if the order is chosen probabilistically in each run of the experiment. This concludes the proof.
\end{proof}

\section{Device-independency and theory-independency}
\label{subsec:DI-TI}

Causal witnesses, violation of Bell inequalities for temporal order, and violation of causal inequalities build a hierarchy of the notion of indefinite causality. The weakest notion of indefinite causality is that of causal non-separability, which is formulated using quantum theory. A violation of a causal inequality is the strongest notion as it is formulated solely in terms of observable probabilities $p(a,b|x,y)$ without any assumption about the internal function of experimental devices --- it is therefore device-independent. The violation of a Bell inequality for temporal order should be considered, in our view, a stronger proof of indefinite causality than the measurement of a causal witness, but a weaker proof than a violation of a causal inequality. The reason why it is weaker than a causal inequality violation is that, although it too is formulated in terms of the probabilities $p(a,b|x,y,\omega)$, it also involves the notion of state $\omega$ and the assumption how laboratory operations act on it (see Appendix~\ref{Met-Sec:ProofTheorem}) --- this causes the proof to be device-dependent.  However, it can be defined for a class of generalized probabilistic theories, and therefore it does not rely on the quantum formalism. It is thus more general than the notion of a causal witness. Although the quantum-switch violates a weaker notion of causality, shaped for quantum theory, it cannot violate the stronger (device-independent) notion of causal inequalities. The open question addressed in our work is then whether we can still use the quantum-switch to perform a proof of indefinite causal order independent of quantum formalism. The present experimental study provides an affirmative answer to this question.


\section{Relation between the present work and Ref.~\cite{mag}}
\label{Met-Sec:RelToTheory}

In Ref.~\cite{mag}, the position of a massive object serves as a ``control'' quantum system and a quantum system (e.g., a photon) that is exchanged between Alice's and Bob's laboratory as a ``target'' system. By putting the massive object in a macroscopic superposition  of two positions, one closer to Alice's and the other closer to Bob's position, one induces a relative time dilation between Alice's and Bob's laboratory. The superposition of massive objects can effectively lead to ``entanglement'' of the temporal order between local operations, enabling the violation of a Bell-type inequality. In the conceptual framework of general relativity, the resource for the violation is a ``non-classical space-time'' created by macroscopic superposition of large masses. In the second-quantized picture, the superposition can be seen as entanglement in the Fock basis, and the scheme enables one to ``swap'' this entanglement to the final entanglement of the target systems. Unfortunately, the physical demands of the proposal make that experiment infeasible. 
However, quantum control of the order of events can also be achieved without the use of gravitational interaction. This can be done, for example, in an extended quantum circuit model, wherein the order of applied quantum gates is coherently controlled by an ancillary system (the quantum-switch).  
The difference between the two scheme is that in the gravitational scheme, the spatio-temporal distance of {\em any} pair of events in a space-time region is influenced by a superposition state of the mass, whereas in the linear optical implementation, only the order of the gates applied on the propagating system (e.g., photons) is indefinite.

A more detailed analysis of the differences and similarities between the gravitational quantum-switch and its photonic counterpart here presented is given in the Suppl. Information - Sec.~III.

\section{Experimental proof of assumption I in GPTs}
\label{Met-Sec:Ass1}

Recall that assumption I says that the initial target states do not violate a Bell inequality. In the notation introduced above, the initial target state is $\omega_{1,2}^t$. Our  demonstration of assumption I presented here is based solely on experimental data, and can be shown to be valid for a class of GPTs. Our goal is to prove that the input state is a product state, and thus it is local. 

Let us denote the probabilities for measurement outcomes as measured on reduced states of the target system of S$1$ and S$2$ as $p(o_1|m_1, \omega^t_{1})$ and $p(o_2|m_2, \omega^t_{2})$, respectively. 
If the state is a local product state then the probability for joint outcomes, as measured on the composite system of the two target subsystems in the initial state $\omega^t_{1,2}$, is factorisable, i.e., it can be expressed as
\begin{equation}
p(o_1,o_2|m_1,m_2, \omega^t_{1,2})= p(o_1|m_1, \omega^t_{1}) \cdot p(o_2|m_2, \omega^t_{2}).
\end{equation}
We experimentally performed a large set of measurements on the input target states, and checked for this property.
The measurements we made are tomographically complete in quantum theory. Nevertheless, in a GPT this might not be the case, as a GPT system may have more degrees of freedom than a quantum system. We thus restrict our considerations to a class of GPTs for which we assume that polarization measurements in three unbiased bases for each photon constitute a subset of the full set of ``fiducial measurements''. For example, in the case of GPTs whose systems are described by Bloch vectors of dimension \textit{d}, three components of the vectors correspond to ``quantum fiducial measurements'' of a single system. Similarly, in the GPTs, the correlation tensor is given by $d^2$ elements of which $9$ elements (i.e., $3$ fiducial measurements for the first times $3$ fiducial measurements for the second system) correspond to the ``quantum subspace'' of the correlations that are accessible through quantum measurements.

Our measurements confirm that the joint probabilities for ``quantum fiducial measurements'' are factorized for the two targets. In general, however, it might be possible that within the subset of quantum fiducial measurements for a bipartite system the joint probabilities are factorized into a product of marginal probabilities although the overall state is not a product one. This is because non-zero correlations could exist between non-quantum fiducial measurements. It would then be possible to transfer correlations from the ``non-quantum subspace'' into the correlations within the ``quantum subspace'' by applying some ``exotic'' (i.e., non-quantum) local transformations. Nevertheless, for our class of GPTs, where subsystems are represented by Bloch vectors of general dimension $d$ and pure quantum states are pure states of the GPTs, we know that if both subsystems individually return probability one for some measurement outcome,
the state is a pure  product one and it cannot violate assumption I (\textit{Lemma 1} of Ref.~\cite{GPT3}). More precisely, the state would have the form $\omega^t_{1,2} = \omega^t_{1,2} (\omega^t_1, \omega^t_2, \omega^t_1 (\omega^t_2)^T, \xi^t)$, with $\omega^t_1$ and $\omega^t_2$ being in pure states, i.e., $ ||\omega^t_1|| = ||\omega^t_2|| =1$. In our experiment, we obtain outcomes with probability one for a pair of quantum fiducial measurements on the two target subsystems, and therefore the two subsystems cannot exhibit any further correlations within the non-quantum subspace.

Table S1 shows the values of the probabilities $p(o_1,o_2|m_1,m_2, \omega^t_{1,2})$ (which, for brevity, is indicated as $p_{1,2}$ in the Tables) in the first four columns, and the marginal probability products $p(o_1|m_1, \omega^t_{1}) \cdot p(o_2|m_2, \omega^t_{2})$ (denoted as $p_{1} \cdot p_{2}$ in the Table) in the last four columns, with almost perfect correlations in the $\lbrace H, V\rbrace$ basis. Moreover, the joint and the two marginal probabilities are all almost one for the HH outcome, confirming the high purity of the bipartite state. It can be seen that the two sets of probabilities agree well.  More quantitatively, let us define the \textit{root-mean-square} (RMS) distance between the two sets of probabilities as
\begin{equation}
\label{eqn:dist}
\text{d} = \sqrt{\frac{1}{N} \sum_{o_1,o_2}^{} \sum_{m_1,m_2}^{} \Delta p_{o_1,o_2,m_1,m_2}^2},
\end{equation}
with
\begin{align}
&\Delta p_{o_1,o_2,m_1,m_2} = p(o_1,o_2|m_1,m_2, \omega^T_{1,2}) \notag \\
&- p(o_1|m_1, \omega^T_{1}) \cdot p(o_2|m_2, \omega^T_{2}), 
\end{align}
and where $N$ is the number of data points.
Evaluating this over our results, we obtain a RMS distance of $(0.6 \pm 2.7) \cdot 10^{- 2}$, indicating that the two distributions are equal within error.

Although the two target systems are approximately in a product state, the small discrepancy between the two distributions allows for some correlations between the systems. 
Following the Peres-Horodecki criterion~\cite{PhysRevLett.84.2726, PhysRevLett.84.2722}, the maximal value of the CHSH inequality in quantum mechanics is given in terms of the two largest absolute values of the correlation tensor singular values, say $t_1$ and $t_2$, as $2\sqrt{t_1^2 + t_2^2}$. From the full set of the fiducial measurements in the quantum subspace, we can estimate the maximal possible amount of violation of the CHSH inequality for the two target systems to be  $2\sqrt{t^2_1 + t^2_2}= 2.12 \pm 0.04$. This is more than ten standard deviations lower than the observed violation of the inequality. Thus, this digression from assumption I cannot explain the violation of the inequality.

\vfill


\section{Entangled photon source}
\label{Met-Sec:Source}

\begin{figure*}[htb]
\centering
\includegraphics[width=\textwidth]{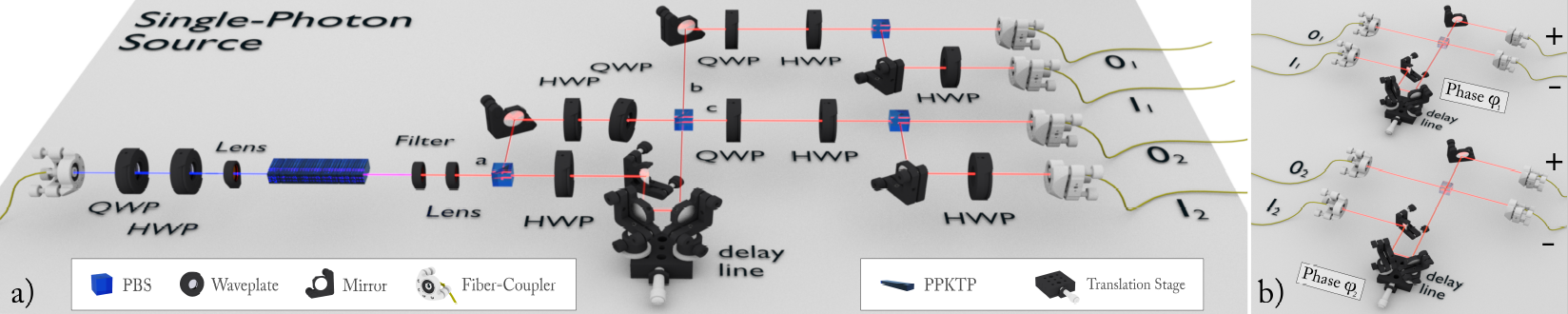}
\captionof{figure}{\footnotesize \textbf{Entangled photon-pair source.} \textbf{a)} \textit{The source} --- The beam from a Toptica DL Pro HP 426 laser is focused on a 30-mm-long PPKTP crystal, phase-matched for degenerate collinear type-II SPDC from $426$ nm to $852$ nm. The phase-matching is finely tuned by controlling the temperature of the crystal with a precision greater than $0.01 K$. The emitted photons have a bandwidth of approximately $0.2$ nm. After the crystal, the residual pump beam is filtered, the photons are then collimated and sent to a set-up to create entanglement by post-selection (as explained in the main text). The entanglement is first produced in polarization and then converted into path using polarizing beam splitters. The source produces $\approx 30.000$ path-entangled photon pairs per second with a pump power of $8$ mW.
\textbf{b)} \textit{Set-up used to measure a Bell Inequality on the path qubits} ---  The two paths composing each qubit are interfered on a beam splitter (BS) projecting each qubit onto a basis on the equator of the Bloch sphere (see main text for more details). 
}
\label{img:source}
\end{figure*}

A \textit{periodically-poled potassium titanyl phosphate} (PPKTP) crystal, phase-matched for collinear type-II \textit{spontaneous parametric down-conversion} (SPDC), converts one photon at $426$ nm into two photons at $852$ nm. The photonic state after the crystal can be approximated to a Fock state of two photons in two orthogonal polarization modes $\vert H, a \rangle \vert V, a \rangle$, where $a$ indicates the common spatial mode of the two photons defined in Fig.~\ref{img:source}. Two PBSs are used to separate and then recombine the two photons. Each photon passes through a HWP set at $\pm 45^{\circ}$. The state after the second PBS is therefore: $\bigl(\ket{H, b} \ket{H, c} - \ket{H, b} \ket{V, b} + \ket{H, c} \ket{V, c} - \ket{V, b} \ket{V, c}\bigr) /\ 2$, where $b$ and $c$ indicate the two output spatial modes of the second PBS. By post-selecting on coincidences, only the part of the state with the photons in two different spatial modes is kept, resulting in the polarization-entangled state $\bigl(\vert H, b\rangle \vert H, c\rangle - \vert V, b \rangle \vert V, c \rangle\bigr) /\ \sqrt{2}$. 
We then use two  PBSs and two HWPs (Fig.~\ref{img:source}) to convert this state into a path-entangled state: $\bigl(\ket{0}_1 \ket{0}_2 - \ket{1}_1\ket{1}_2\bigr) /\ \sqrt{2}$, where the notation is the same as specified in Fig.~\ref{img:source}.
A trombone delay line in between the two PBSs is used to compensate temporal delay between the two photons, and a multi-order QWP in one mode is tilted to compensate for undesired phases between the two components of the final quantum state.
The delay line and the QWP can be also used to modify the final output state in a controllable way. In particular, by unbalancing the two paths by the coherence length of the down-converted photons, the entangled state can be converted into a statistical mixture of the states $\vert 0\rangle_1 \vert 0\rangle_2$ and $\vert 1 \rangle_1 \vert 1 \rangle_2$.

For our experiment, both the path and the polarization states of the photon pairs are important.  To characterize the polarization state, we can perform two-qubit polarization state tomography using a QWP, a HWP and a PBS for each photon ({Fig.~\ref{img:source}, Panel \textbf{a}}).
To characterize the path entanglement, we perform a Bell measurement on the path qubits using the apparatus shown in Fig.~\ref{img:source}, Panel \textbf{b}, which essentially consists of one Mach-Zehnder interferometer for each photon.  The phase of the interferometers sets the measurement bases $\lbrace \frac{1}{\sqrt{2}}(\ket{0} + e^{-i\phi_i}\ket{1}), \frac{1}{\sqrt{2}}(\ket{0} - e^{-i\phi_i}\ket{1}) \rbrace$.
Using these two interferometers we can measure all what is required for a CHSH parameter: 
\begin{align}
S=&\Bigl\lvert C(o_1,o_2) + C(o_1',o_2) \notag\\
&+ C(o_1,o_2') - C(o_1',o_2') \Bigr\rvert,
\label{eqn:CHSH}
\end{align}
where
\begin{equation}
C(o_1,o_2)=\dfrac{N_{++}-N_{+-}-N_{-+}+N_{--}}{N_{++}+N_{+-}+N_{-+}+N_{--}}.
\end{equation}
Here, $N_{++}$ is the number of coincidence events between detectors labelled $+$ for each photon in Fig.~\ref{img:source}, Panel \textbf{b}, $N_{+-}$ the number of coincidence events between detectors $+$ and $-$ for each photon, and so on.

Fig.~\ref{img:BellPath} shows the characterization of the entanglement of the joint input control state, 
where we verified the initial entanglement by performing a Bell measurement on the joint control system before the quantum-switch, obtaining a CHSH parameter of $2.58\pm0.09$. 

\begin{figure}[ht]
\centering
\includegraphics[width=.9\columnwidth]{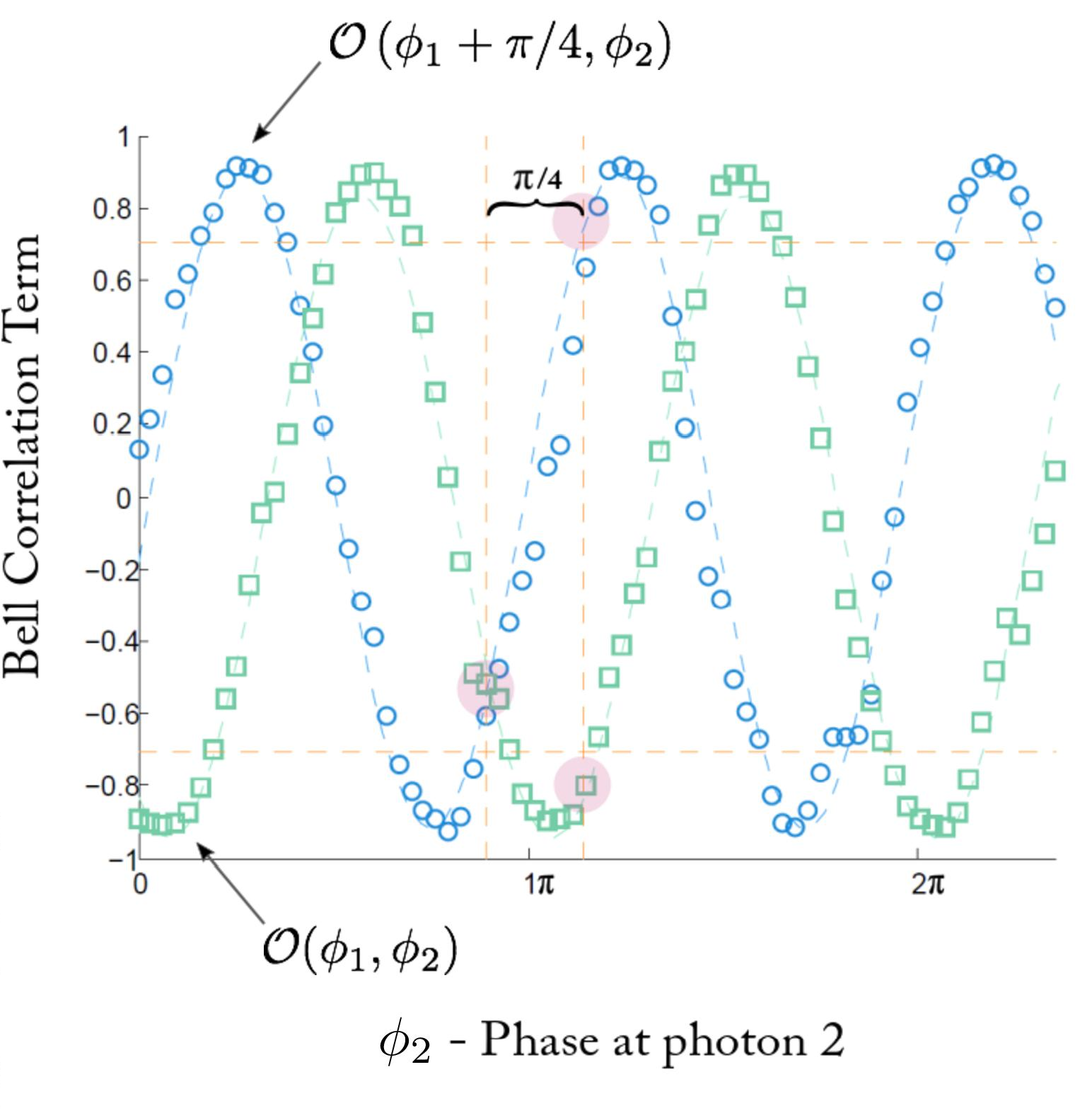}
\captionof{figure}{\footnotesize \textbf{Input control state characterization: Bell measurement on the order qubits.}
Each curve is a measurement of a Bell correlation term $C\bigl(\mathcal{O}_1(\phi_1),\mathcal{O}_2(\phi_2)\bigr)$ on the control qubits, wherein the phase of $\phi_1$ is fixed, and the phase $\phi_2$ is scanned. As described in Eq.~\eqref{eqn:CHSH} of Appendix~\ref{Met-Sec:Source}, we test the Clauser-Horne-Shimony-Holt (CHSH) inequality~\cite{PhysRevLett.23.880} achieving a violation of $2.59 \pm 0.09$.
For the data in the green curve, the phase $\phi_1$ was nominally shifted by $\pi/4$ rad with respect to the blue curve.
The red shaded areas represent the regions where values of $\phi_1$ and $\phi_2$ correspond with those used to construct our CHSH parameter (Eq.~\eqref{eqn:CHSH} of Appendix~\ref{Met-Sec:Source}). In particular, $\mathcal{O}=(\mathcal{O}_1, \mathcal{O}_2)$ where $\mathcal{O}_i(\phi_1,\phi_2)=cos(\phi_i) \, \sigma_x+cos(\phi_i) \,\sigma_z$.
These data confirm that the two photons start in a path-entangled state, and the polarization state is initially separable.
}
\label{img:BellPath}
\end{figure}


\section{Experimental proof that IIb does not hold in GPTs under the assumption of III}
\label{Met-Sec:Ass2}

In this section, we experimentally prove that the form of violation IIb does not hold for the case in which assumption III holds. Form of violation IIb states that the laboratory operations may couple (i.e., may generate non-local correlations between) the control and the target subsystems, within a given party S$i$, $i=1,2$. 
To test it, we first prepare the control and target subsystems in a tomographically-complete set of product states within quantum theory, i.e., 
\begin{align}
&p(o^c,o^t|m^c,m^t,\omega_i=\omega^c \omega^t)\notag\\
&=p(o^c|m^c,\omega^c) \cdot p(o^t|m^t\omega^t)
\label{eqn:p_omega_i}
\end{align}
for all states $\omega_i = \omega^c \, \omega^t$ from a tomographically-complete set of product states. In the GPTs, Eq.~\eqref{eqn:p_omega_i} shows that, for a product state from the quantum subspace, the probability for a joint outcome factorizes into the product of the probabilities for individual outcomes.
We then set a single quantum-switch to have only one operation inserted, either $U_{i_A}$ or $U_{i_B}$.
We finally verify that, for the full set of preparations, the control and target subsystems are still in a product state after the quantum-switch, when this contains only $U_{i_A}$ or only $U_{i_B}$.
More precisely, we verify that 
\begin{subequations}
\begin{align}
&p(o^c,o^t|m^c,m^t,U_{i_A} \omega_i)\notag\\
&\quad=p(o^c|m^c,U_{i_A}\omega_c) \, p(o^t|m^t, U_{i_A}\omega_t), \\  
&p(o^c,o^t|m^c,m^t,U_{i_B} \omega_i)\notag\\
&\quad=p(o^c|m^c,U_{i_B}\omega_c) \, p(o^t|m^t, U_{i_B}\omega_t), \label{eqn:p_U_i_omega_i}
\end{align}
\end{subequations}
for any state from a complete set of product states, and by linear extension to an arbitrary product state $\omega_i$. We do this using the same technique we used to verify that the target qubits began in an input state (Appendix~\ref{Met-Sec:Ass1}). Finally, we make use of the following property: if neither operation $U_{i_A}$ nor $U_{i_B}$ alone couple the two subsystems, then also a sequence of the two operations cannot couple them as long as they are performed in a definite causal order. This conclusion follows directly from Eqs.~\eqref{eqn:p_omega_i}-\eqref{eqn:p_U_i_omega_i}:
\begin{align}
 &p(o_c,o_t|m_c,m_t,U_{i_B} \circ U_{i_A} \omega_i) =\\
 &p(o_c|m_c,U_{i_B} \circ U_{i_A}\omega_c) \; p(o_t|m_t, U_{i_B} \circ U_{i_A }\omega_t).\notag
\end{align}
Note that even under a multiple usage of $U_{i_A}$ and $U_{i_B}$ there can be no coupling when the operations are performed in a definite causal order. This finalizes the proof of assumption IIb. 

Tables S2-S3 report the values of the probabilities $p(o_c,o_t|m_c,m_t, \omega_{1})$ (which, for brevity, are indicated as $p_{c,t}$ in the Tables) compared with the marginal probability products $p(o_c|m_c, \omega^c_{1}) \cdot p(o_t|m_t, \omega^t_{1})$ (denoted as $p_{c} \cdot p_{t}$ in the Tables).

The tomographically-complete sets of fiducial quantum measurements reported in Tables S2-S3 were performed as follows. In order to vary the input state of the control system among $\ket{+}^c$, $\ket{-}^c$, $\ket{R}^c=\bigl(\ket{0}^c-i\ket{1}^c\bigr)/\sqrt{2}$, and $\ket{L}^c=\bigl(\ket{0}^c+i\ket{1}^c\bigr)/\sqrt{2}$, we set the relative phase between the two trajectories after the first beamsplitter by means of a delay stage mounted on a calibrated piezo-actuator.
Instead, by blocking either path, we prepared $\ket{0}^c$ and $\ket{1}^c$.
Likewise, we measure the path qubit in the following way.
To measure in $\{\ket{+}^c,\ket{-}^c\}$, or $\{\ket{R}^c,\ket{L}^c\}$, we suitably set the relative phase between the two paths before recombining them at the second beamsplitter. This can be done by adding the required phase for state preparation and subtracting the phase for state measurement. Such a phase is then converted into a path delay and sent to the piezo-actuated delay stage.
(We use the same delay stage to both set the phase of the path state, and to measure it in $\{\ket{+}^c,\ket{-}^c\}$, or $\{\ket{R}^c,\ket{L}^c\}$.)
To measure in the $\{\ket{0},\ket{1}\}$ basis, we block either path before the 50/50 beamsplitter, and we then sum the counts from the two paths after the beamsplitter.

The displayed output probabilities $p(o_c,o_t|m_c,m_t, \omega_{1})$ are very close to those corresponding to a product state. This is indicated by the fact that the RMS distance [Eq.~\eqref{eqn:dist}] between these two sets of probabilities (the measured joint probabilities  $p(o_c,o_t|m_c,m_t, \omega_{1})$, and that given by the product $p(o_c|m_c, \omega^c_{1}) \cdot p(o_t|m_t, \omega^t_{1})$) is $(2 \pm 3) \cdot 10^{- 2}$ 
when only operation $U_{i_A}$ was acting on the system, and $(3 \pm 3) \cdot 10^{- 2}$ when only operation $U_{i_B}$ was present.
Moreover, the displayed output probabilities $p(o_c,o_t|m_c,m_t, \omega_{1})$ are very close to those of a pure (product) state, which means that there cannot be any correlations in the non-quantum subspace. More precisely, the non-vanishing discrepancy between the two probability distributions could be caused by a weak coupling between the control and the target system. Using the same technique as in Appendix~\ref{Met-Sec:Ass1}, we can estimate that the correlations established through this coupling are too weak to violate the CHSH inequality ($2 \sqrt{t^2_1+ t^2_2} = 1.76 \pm 0.04$). The coupling between each pair of the control and the target systems can ``swap'' the correlations from the bipartite state of the control system to that of the target system. However, assuming that the transferred amount of correlations to the target system cannot be larger that the amount produced through the coupling between each pair of the target and control system, we conclude that the coupling cannot result in the target systems violating the Bell's inequality.

This confirms within experimental error, under the hypothesis that assumption III is valid, that the form of violation IIb does not occur in our experiment. Furthermore, this proof holds not only within quantum theory but also for our class of GPTs. In other words, our measurements imply that in both of the two quantum-switches, individually, the laboratory operations do not couple the target and the control subsystems in GPTs when these operations are executed in a definite causal order. 
From this experimental test, we thus conclude that in our experiment assumption II cannot be false unless assumption III is also violated.

\vspace{3mm}
\section{Experimental Tests assuming a Quantum Mechanical Description}
\label{Met-Sec:within_QM}

Although our results are not based on the assumption of a quantum description of the experiment, below we report the analysis of the results of our experiment within this description for completeness.

\subsection{Entanglement of the output target state}

In this subsection, we report the results of the polarization-state tomography on the two-qubit output target state after the quantum-switches. 
The resulting density matrix is presented in Fig.~\ref{img:RR-LL}, and it shows a clear presence of entanglement.  The reconstructed state has a fidelity of $0.922 \pm 0.005$ with the ideal one [Eq.~\eqref{eq:entangledState}], and a concurrence of $0.95 \pm 0.01$.

\begin{figure}[htb]
\centering
\includegraphics[width=\columnwidth]{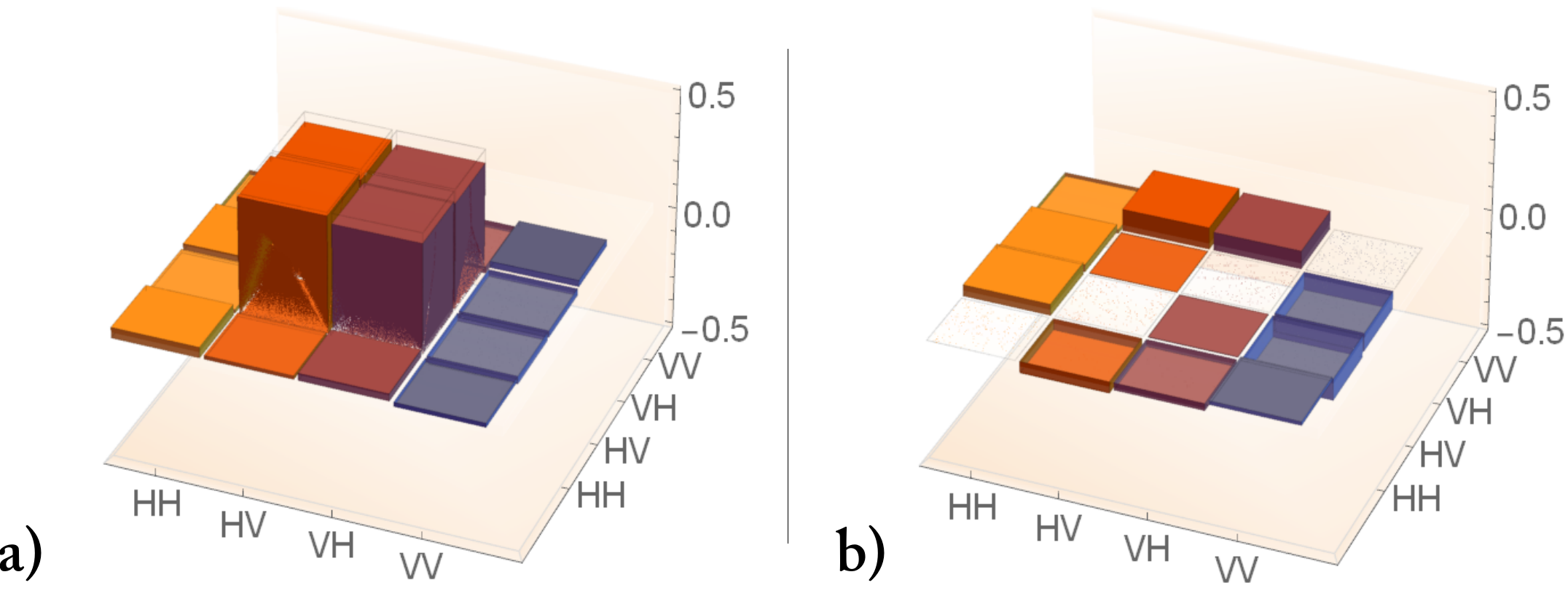}
\captionof{figure}{\footnotesize
\textbf{Output state characterization.} Panels \textbf{a} and \textbf{b} show the real and imaginary parts, respectively, of the two-photon polarization state measured after the photons leave the quantum-switches. For the data shown here, the two control qubits were found to be in the same state (either $\ket{+}_1^c\ket{+}_2^c$ or $\ket{-}_1^c\ket{-}_2^c$).  This state has a fidelity of $0.922 \pm 0.005$ with the target state $(\ket{HV}+\ket{VH})/\sqrt{2}$, and a concurrence of $0.95 \pm 0.01$.
Performing a Bell measurement directly using this state results in a CHSH parameter of $2.55 \pm 0.08$. 
}
\label{img:RR-LL}
\end{figure}

\subsection{Verification of assumption I}

Within quantum theory, one can show the validity of assumption I by demonstrating that the state is separable; this can be done using quantum state tomography, for example. 
To this end, we performed tomography on the target states before the quantum-switches. The resulting density matrix is shown in Fig.~\ref{img:HH}, Panels \textbf{a} and \textbf{b}.  
For our experiment, the target state was nominally prepared in $\ket{HH}$; our measured state has a fidelity of $0.935 \pm 0.004$ with $\ket{HH}$. Furthermore, the concurrence of the estimated state is $0.001 \pm 0.010$, indicating that, within experimental error, the initial target state is separable, in agreement with assumption I. 
The error bars are computed using a Monte Carlo simulation of our experiment; the dominant contribution comes from errors in setting the WPs, and cross-talk in the polarizing BSs.

\begin{figure}[tb]
\centering
\includegraphics[width=\columnwidth]{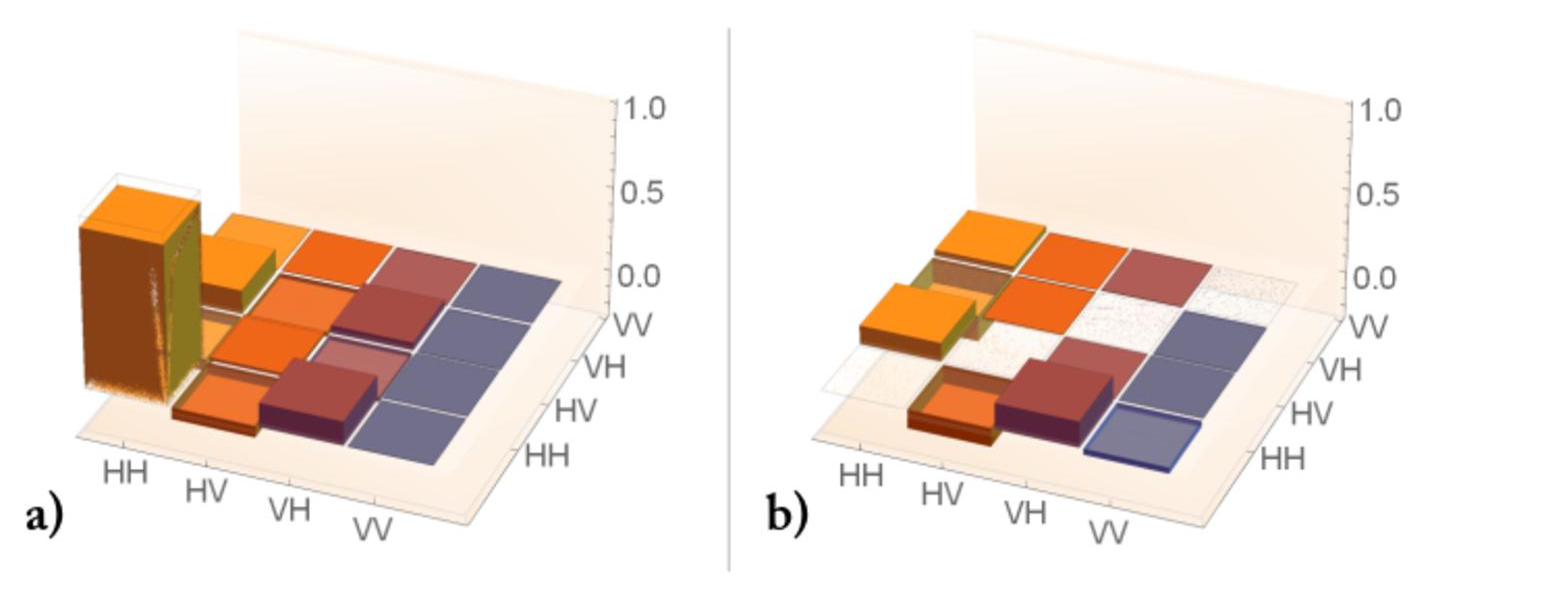}
\captionof{figure}{\footnotesize \textbf{Input control state characterization: State tomography of the target qubits.}
The real (Panel \textbf{a}) and imaginary (Panel \textbf{b}) parts of the two-photon polarization state are measured before the two photons enter the quantum-switches. This state has a fidelity $0.935 \pm 0.004$ with the ideal state $\ket{HH}$, and a concurrence of $0.001 \pm 0.010$.}
\label{img:HH}
\end{figure}

\section{Data analysis}
\label{Met-Sec:Data_Analysis}

In order to convert the coincidence counts into probabilities, we weight each measured count rate by the net detection efficiency of the corresponding detector pair.
We estimate these efficiencies in two parts. 
First, we measure the relative coupling efficiencies between the output ports $M_1$ and $M_2$ of quantum-switch S1, and $M_3$ and $M_4$ of quantum-switch S2.
Then, within each output port, we measure the relative efficiency of the detector in the transmitted port and the reflected port.
We find relative efficiencies between $\approx 0.85$ and $1$.
For more details, see the Methods of our previous work~\cite{Rubinoe1602589}.

The main source of error in our experiment was phase fluctuations.
In the Bell measurement, this dephasing is mainly due to two contributions. (1) Undesired phase-shifts in the interferometer (which we estimated to be about $0.97^{\circ}\pm 0.02^{\circ}$). (2) Fluctuations of the source, which produces time varying phase between the $\ket{HH}$ and $\ket{VV}$ terms.  In our source, we estimate this to be approximately $1.9^{\circ}\pm 0.1^{\circ}$, which is caused by a combination of fluctuations in the pump laser wavelength, and the phase-matching temperature.
We convert these errors into an error in the Bell parameter using Gaussian error propagation.
To calculate the error for the Bell measurements on the polarization qubits after the quantum-switches, we consider the same error sources as above (where now the phase shifts in the measurement interferometer are replaced by phase shifts in the quantum-switches).
However, we also consider errors arising from setting the polarization measurements.
Finally, to estimate the errors in the results extracted from tomography (i.e., fidelity and concurrence), we performed a Monte Carlo simulation considering the phase fluctuations discussed above.

\section{Additional consistency tests: the insertion of noise}
\label{Met-Sec:AddTests}

\begin{figure*}[ht]
\centering
\includegraphics[width=.95\textwidth]{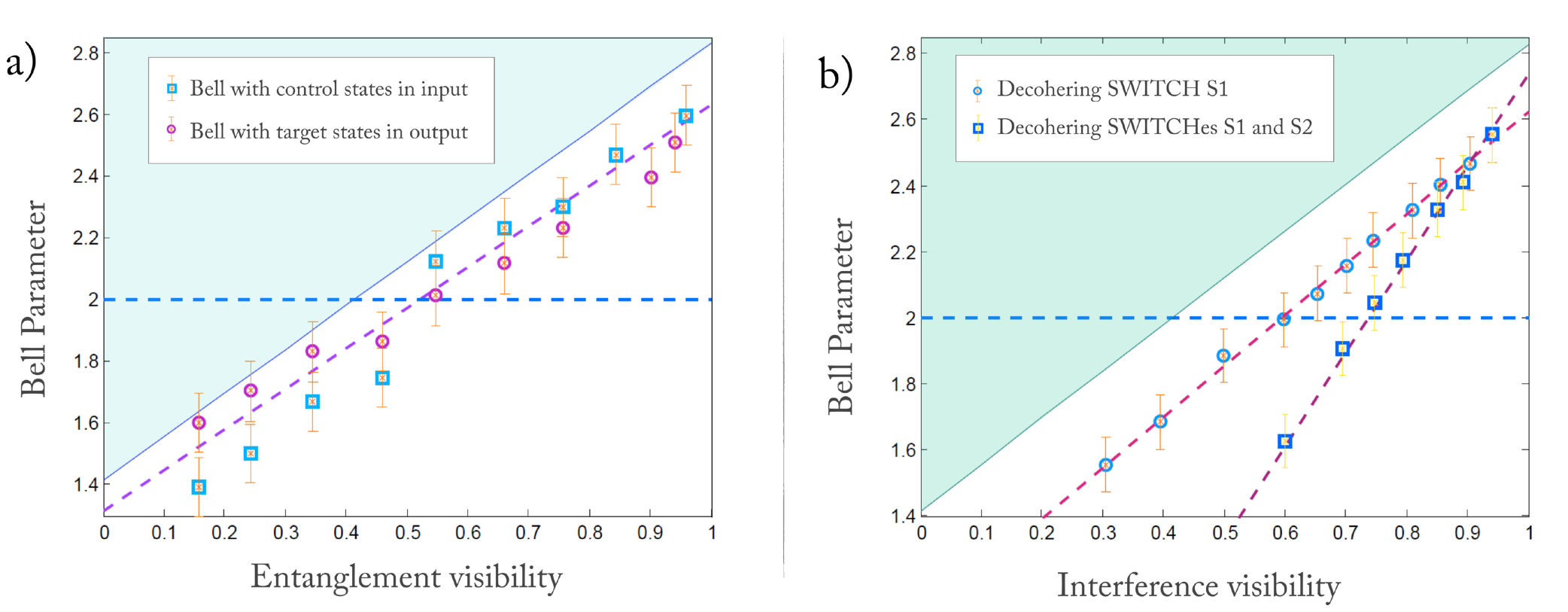}
\captionof{figure}{\footnotesize \textbf{Bell parameter in the presence of various decoherence sources.} 
\textbf{a)} For these data, the initial entanglement of control qubits is decreased passing from the entangled state $\frac{1}{\sqrt{2}}\bigl(\ket{0,0}-\ket{1,1}\bigr)$ to a mixture of $\ket{0,0}$ and $\ket{1,1}$.
We measure the Bell parameter both on the input path qubits (squares) and output polarization qubits (circles) as the source is decohered.  Here, the Bell parameter is plotted versus the visibility of the entangled state in its anti-correlated basis.
The dashed line is a simulation of the experimental results.
\textbf{b)} For these data, the coherence of the superposition of the orders of operations inside the quantum-switches is decreased, leading to a classical mixture of orders.
To control this transition, we decrease the visibility of either only one of the two interferometers (circles), or of both interferometers at the same time (squares). 
Each graph shows the Bell parameter plotted versus the visibility of one interferometer.
The dashed lines are linear fits to the data.
The horizontal dashed blue line, in both plots, is the classical limit for a Bell violation.
When the state of the control qubit is too decohered, we can no longer violate a Bell inequality.
}
\label{img:decreasing_visiswitch}
\end{figure*}

In this section, we present two further tests of consistency of our experimental proof.

First, we decreased the entanglement of the joint control system by increasing the delay of the interferometer inside the source (see Appendix~\ref{Met-Sec:Source}). 
The more mixed the state of the control system becomes, the smaller is the amount of violation of a Bell inequality with the target systems which we can achieve, up until reaching the threshold of non-violation. The Bell parameter versus the ``source visibility'' (i.e., the two-photon visibility in its anti-correlated basis) is plotted in Fig.~\ref{img:decreasing_visiswitch}\textbf{a}. The dashed line is a calculation of the expected Bell parameter, including the imperfect visibility of the two interferometers.  All the data points agree with the expected trend within error. The small step at an entanglement visibility of around $0.5$ was caused by a lower fringe visibility which increased the systematic error in setting the phases $\phi_1$ and $\phi_1 + \pi/4$ (see Fig.~\ref{img:HH}).

As a second test, we decreased the degree of causal non-separability of the two processes. To do this, we introduced distinguishing information between the paths corresponding to the orders ${U}_{i_\text{A}}\preceq {U}_{i_\text{B}}$ and ${U}_{i_\text{B}}\preceq {U}_{i_\text{A}}$ (in only one quantum-switch, squares in Fig.~\ref{img:decreasing_visiswitch}\textbf{b}, and in both simultaneously, circles in Fig.~\ref{img:decreasing_visiswitch}\textbf{b}) by lengthening one of the paths with respect to the other, effectively reducing the visibility of the interferometers comprising the quantum-switches. As this occurs, we transition from a superposition of temporal orders to a mixture of them (in other words, to a causally-separable process, which satisfies assumption III). If all three assumptions are met, one cannot violate a Bell inequality between the two systems. Indeed, we experimentally observe that as the visibility is decreased, the Bell parameter also decreases (Fig.~\ref{img:decreasing_visiswitch}, Panel \textbf{b}). 

\renewcommand{\baselinestretch}{1.2}
\bibliography{QDC_BIB}
\bibliographystyle{plainnat}

\onecolumngrid

\section*{Supplementary Information}

\subsection*{I. Quantum-switch and causal inequalities}

The quantum-switch~\cite{NatCommun.6, Rubinoe1602589} has been shown not to violate causal inequalities, making it impossible to use such a violation as a theory-independent proof that the causal order of the operations in the quantum-switch is indefinite. Here, we briefly re-examine such reasoning following Refs. ~\cite{NewJournPhys.17.102001, Oreshkov_2016NJP}.

We introduce the $x$, $y$ and $z$ indices to refer, respectively, to the measurements choices of Alice, Bob and Charlie. We call $a$, $b$ and $c$ their respective measurement results. It is always possible to re-write $p(a,b,c|x,y,z)$ as 
\begin{equation}
p^{\text{switch}}(a,b,c|x,y,z) = p(c|a,b,x,y,z) \, p(a,b|x,y,z).
\end{equation}
It should be noticed that, regardless of the causal order between operations in Alice's and Bob's laboratory, the operation in Charlie's laboratory  always occurs after them. In other words, his operation is in the future light cone of both Alice's and Bob's operations. Thus, $a$ and $b$ cannot depend on $z$, so
\begin{equation}
p(a,b|x,y,z) = p(a,b|x,y).
\end{equation}
As previously observed, after tracing out Charlie's laboratory in the quantum-switch, the process of Alice and Bob is causally separable. Thus, one can rewrite $p(a,b|x,y)$ in the form of a convex mixture, obtaining
\begin{equation}
p^{\text{switch}}(a,b,c|x,y,z) = p(c|a,b,x,y,z) \bigl[\zeta \cdot p^{A \preceq B}(a,b|x,y)
+ (1-\zeta) \cdot p^{B \preceq A}(a,b|x,y)\bigr],
\end{equation}
with $\zeta \geq 0$. We can combine the probabilities $p^{A \preceq B}(a,b|x,y)$ ($p^{B \preceq A}(a,b|x,y)$) and $p(c|a,b,x,y,z)$ as a product of the probability respecting the order $A \preceq B$ ($B \preceq A$) with the probability respecting the order $\lbrace A,B \rbrace \preceq C$
\begin{equation}
p^{\text{switch}}(a,b,c|x,y,z) = \zeta \cdot p^{A \preceq B\preceq C}(a,b,c|x,y,z) + (1-\zeta) \cdot p^{B \preceq A\preceq C}(a,b,c|x,y,z).
\end{equation}
Therefore, the quantum-switch is a process whose probabilities have a ``causal model'', i.e., it can always be understood as arising from events that are causally ordered, or from a convex mixture of causally ordered events. Thus, it satisfies all causal inequalities.

\subsection*{II. Hidden local definite causal order}

In general, while experimental tests can be used to prove that the conjunction of the assumptions underlying a given no-go theorem does not describe the phenomenology observed within quantum mechanics, they do not provide information on which of the assumptions is to be discarded. In this experiment, the application of the Bell's theorem to temporal order allowed us to test a conjunction of all our assumptions; yet, in order to verify which assumptions are valid, additional tests on a single quantum-switch were necessary. This notwithstanding, it is worth noting that testing only one single quantum-switch would not have provided an as stringent information.
In fact, as we showed above the experimental data taken from a single quantum-switch cannot violate causal inequalities, and thus can be understood as arising from an underlying causal model, in the spirit of simulation of quantum statistics by hidden variables. Such a model generates statistics compatible with operations performed on a system in a definite order, or in a convex mixture therefrom. 
In terms of probabilities, the statistics in the quantum-switch $p^{\text{switch}}(a,b,c|x,y,z)= \int \, d\lambda \, \rho(\lambda) \,  p^{\text{causal}}(a,b,c|x,y,z,\lambda)$, where $p^{\text{causal}}(a,b,c|x,y,z,\lambda) = p^{A  \preccurlyeq B \preccurlyeq C} (a,b,c|x,y,z)$ or $p^{B  \preccurlyeq A \preccurlyeq C} (a,b,c|x,y,z)$ or a classical mixture therefrom, could therefore be mimicked by an underlying causal hidden variable model. The statistics obtained measuring the double quantum-switch with entangled temporal orders rules out a {\em local} causal hidden variable model that allows for this description, i.e., its statistics is incompatible with
\begin{align}
\label{eqn:p_2switches}
p^{\text{2-switches}}&(a_1,b_1,c_1,a_2,b_2,c_2|x_1,y_1,z_1,x_2,y_2,z_2) = \notag\\
&= \int d\lambda  \, \rho(\lambda)  \, p^{\text{causal}}(a_1,b_1,c_1|x_1,y_1,z_1,\lambda)  \, p^{\text{causal}}(a_2,b_2c_2|x_2,y_2,z_2,\lambda).
\end{align}
In other words, the causal model is called ``local'' if the statistics 
is compatible with the assumptions that: (a) the order of events in each of the laboratories is definite but may be correlated by a common cause in the past, and (b) measurement choices may have only local influences.
Our experimental data rule out the models in Eq.~\eqref{eqn:p_2switches} for the special case where Alice and Bob both apply a single operation with a single outcome (unitary).

\subsection*{III. On the physical implementability of the quantum-switch}

Skepticism has been expressed about whether a tabletop experiment can demonstrate indefinite causal structures. In Ref.~\cite{MacLean}, the authors claim that it is not possible to implement the quantum-switch without ``exotic physical scenarios''. In particular, they argue that one would need a closed time-like curve, and even then such an implementation would be inconsistent, being able to generate logical contradictions such as the grandfather paradox.
These criticisms are based on the assumption 
that   
causal structures must be represented via directed graphs. 
In this representation, the quantum-switch becomes a directed graph with a cycle, which could indeed be 
inconsistent and could generate logical contradictions. 

The tension between directed acyclic graphs (DAG) and causal structures in the quantum-switch is akin to the tension between classical ``hidden'' variable theories and quantum theory. For example, in order to describe an interferometric experiment in terms of classical variables, one is forced to say that the interfering system follows some exotic trajectories or in some non-local manner follows two classical trajectories ``at once.'' However, within quantum
theory one interprets interferometric tests as demonstrating that the very assumption that a system does follow a definite path is violated.

The formal sense in which the causal order of applying  operations in a quantum-switch is non-classical has been recently studied in Ref.~\cite{Oreshkov1801.07594v1}. The motivation of that work was to understand where and when the operations happen in the quantum-switch,
which is precisely the question brought up in the context of a DAG representation. The author shows that the operations applied on systems in a quantum-switch act on subsystems that are not localised in time, i.e., on `time-delocalised' subsystems. It is  further shown that standard quantum theory, without exotic closed timelike-curves, is compatible with such time-delocalised operations and that they indeed realise genuine non-separable quantum processes. The work also concludes that experimental realisations of the quantum-switch, including specifically its entangled version described in this work, are genuine realisations of such time-delocalised processes. In other words, there is a well-defined fashion in which temporal relations between the application of operations in a quantum-switch cannot be represented with DAGs. In fact, the Bell theorem for temporal order~\cite{mag} and its version in this work can be interpreted as a limitation on achievable correlations when operations acting on a quantum system can be embedded in a causal structure compatible with an underlying DAG (or a probabilistic mixture thereof).

Therefore, 
the suitable conclusion to draw is that 
the causal structure in the quantum-switch cannot be represented by a DAG since
the latter can only represent what are called \emph{definite} causal structures~\cite{1367-2630-19-11-113041}. What our work demonstrates experimentally is 
that the quantum-switch represents an \emph{indefinite} causal structure incompatible with any DAG, just like experimental violations of Bell's inequalities show that there exist correlations incompatible with local hidden variables.

The authors of Ref.~\cite{MacLean} further argue that, in a genuine quantum-switch, operations must be performed in the same spatio-temporal regions in each term of the superposition, so that only  their order is swapped.
As mentioned above, Ref.~\cite{Oreshkov1801.07594v1} showed that this is not necessary: in the quantum-switch a single operation can be ``time-delocalised'' over two (or more) spatio-temporal regions. 
In the originally proposed implementation of the quantum-switch~\cite{PhysRevA.88.022318}, as well as in ours, one could in principle register the time at which the signal passes through each box, which would decohere the superposition and make the interference between the causal orders vanish. However, since we do observe coherence, such information does not exists (i.e., it is not stored in any physical degree of freedom, as this would alter the results of the experiment).
The above requirement of ``the same spatio-temporal regions'' whose order is simply swapped is in principle realisable in a gravitational implementation of the quantum-switch for a certain choice of coordinates. There, a massive object is prepared in a spatial superposition, which results in the causal order between two events being opposite in the two  superposed terms. More precisely, a choice of coordinates can be found such that gate $\mathcal{A}$ is performed at a single time (the proper time of a local clock). However, an alternative choice of coordinates may as well be done such that the gate $\mathcal{A}$ is performed in a superposition of different times (according to the coordinate time or the time of a distant observer) in different superposed terms, before and after the gate $\mathcal{B}$. Therefore, even in this gravitational case, it is always possible to make a choice of coordinates where the operations appear to be performed at different times in the different superposed terms. Thus, it is in fact insubstantial to argue whether the operations are ``really performed at the same times, and their order is swapped'', or they are ``merely performed in superposition of different times'', as this depends on the choice of coordinates in which one wishes to describe the scenario (see~\cite{mag, Gu_rin_2018} for further discussions). Furthermore, note that, contrary to the arguments of~\cite{MacLean}, this proposal does not allow for the information to travel back in time, nor does it require closed time-like curves, and it does not give rise to any logical paradoxes.



In Ref.~\cite{MacLean}, another criticism follows from the observation that the quantum-switch can easily be ``simulated'' by using additional copies of the boxes $\mathcal{A}$ and $\mathcal{B}$, as was already noted in the original proposal of the quantum-switch~\cite{PhysRevA.88.022318}. In particular, one could use an unfolded Mach-Zehnder interferometer, with gates $\mathcal{A}_1$ and $\mathcal{B}_1$ in one arm and gates $\mathcal{B}_2$ and $\mathcal{A}_2$ on the other. The straightforward response is that we 
do not use an unfolded Mach-Zehnder interferometer, but rather a folded one, and therefore that we use a single copy of each box instead of two. The number of applications of a box can operationally be determined by a counter (i.e., a ``flag'') that is raised each time the operation is applied on the system.
The very fact that an unfolded Mach-Zehnder interferometer requires two copies of each box to ``simulate'' the statistics of a folded one is a signature that the latter exhibits an indefinite causal order. Moreover, we also note that, in the unfolded version of the interferometer, it would be necessary to actively make $\mathcal{A}_1$ precisely equivalent to $\mathcal{A}_2$, whereas in our case this clearly follows from the implementation itself, as the gates are physically the same. It should also be emphasized that the present Bell-type proof of an indefinite causal order is valid even if the local gates are used more than once, as clarified in the main text. 

Furthermore, following the reasoning in Ref.~\cite{MacLean}, one could say that it is in principle possible to make the gate $\mathcal{A}$ ($\mathcal{B}$) act differently when it comes before or after $\mathcal{B}$ ($\mathcal{A}$), as in each case the photon passes through $\mathcal{A}$ ($\mathcal{B}$) at different times. This is true, but also applies to the originally proposed implementation. To make it \emph{locally} impossible, one could use the above mentioned superposition of a massive object to control the order of operations in two space-time regions. 
In such a scenario, indeed, Alice (Bob) in her (his) local laboratory cannot make the gate $\mathcal{A}$ ($\mathcal{B}$) act differently in case the operation $\mathcal{A}$ ($\mathcal{B}$) is performed before or after $\mathcal{B}$ ($\mathcal{A}$). Nevertheless, a distant observer for whom Alice's (Bob's) operation happens in a superposition of two coordinate-times could make such contingency occur with a cleverly designed set-up (e.g., by sending a signal which triggers Alice's operation to change once it is received, as depicted in Fig.~\ref{img:Gravitational_switch}). As a consequence, as much as in the case of a table-top experiment the operation $\mathcal{A}$ can be made to act differently depending on whether it happens before or after $\mathcal{B}$, in its gravitational counterpart this can be achieved by a distant observer who triggers some change for certain time-coordinates. In conclusion, the requirement that operations $\mathcal{A}$ and $\mathcal{B}$ must \textit{even in principle} be forbidden to change depending on the order has no absolute meaning (i.e., it cannot be realized in all reference frames). Moreover, if the operations differed or their time was revealed (or stored in any degree of freedom), the results of the experiment would differ.

Finally, because of the differences highlighted above, one may object that the physics that describes a photonic quantum-switch is not equivalent to the one which is behind the gravitational quantum-switch for all possible observers. This is indeed correct. In fact, in the first case, the physics is described by Maxwell equations on Minkowski space-time, whereas in the latter case it is non-classical space-time that determines the dynamics.
However, although a local as well as a global observer could tell the difference between the gravitational and the photonic quantum-switch, it is not the case for a quantum particle which travels along the two superposed paths in either versions of the quantum-switch. In fact, in both cases the particle experiences a genuine quantum superposition of causal orders. And this is precisely the purpose of this experimental work: we do not aim to draw conclusions concerning global/local observers, but on the system undergoing the quantum process. Therefore, neither of the schemes (i.e., the gravitational and the photonic quantum-switch) is a ``simulation'' of one another. They are rather two equivalent representations of the dynamics experienced by a quantum particle in presence of a quantum superposition of causal orders, i.e., two representations of a quantum-switch.

\begin{figure}[th]
\centering
\includegraphics[width=0.7\columnwidth]{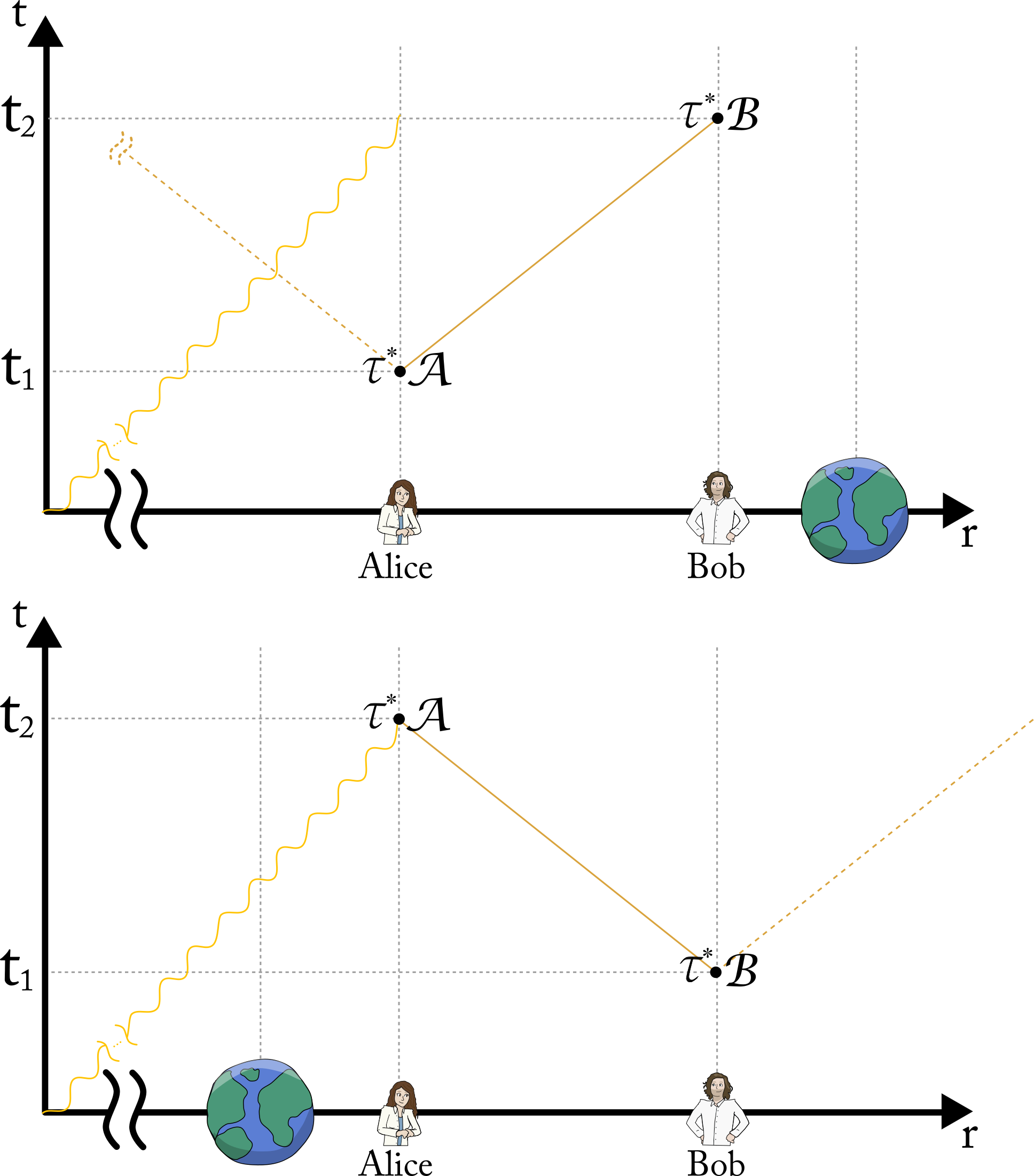}
\captionof{suppfigure}{\footnotesize \textbf{Schematic of a gravitational quantum-switch.} A quantum system is exchanged between Alice's and Bob's laboratories. The order in which such ``target'' system is exchanged is governed by a second system, a ``control'' system, which is encoded in the position of a massive object. By putting the massive object in a macroscopic superposition of two positions, one closer to Alice's and the other closer to Bob's position, one induces a relative time dilation between Alice's and Bob's laboratories. If an outside observer sends some system at a suitably chosen time, let us call it $t_{A \succ B}$, the observer could influence the functioning of the device that implements $\mathcal{A}$, e.g., when it acts second but not when it acts first, making the operation of Alice act different depending on the order.}
\label{img:Gravitational_switch}
\end{figure}

\begin{table}[ht]
\captionof{supptable}{\textbf{Comparison between the two-states probabilities $p(o_1,o_2|m_1,m_2, \omega^t_{1,2})$ and the products of marginal single-state probabilities $p(o_1|m_1, \omega^t_{1}) \cdot p(o_2|m_2, \omega^t_{2})$ for the input target states. - Part I.} The compatibility between the two sets of probabilities shows the separability of the input target state $\omega_{1,2}^t$. We indicate with H and V the states of horizontal and vertical polarization, with D and A the diagonal and anti-diagonal states, with R and L the circular polarization states right- and left-handed. The experimental error associated to each of these probabilities is $\pm 0.01$.}
\label{tab:sep1}
\centering
\begin{tabular}{c||c|c|c|c||c|c|c|c}
{\footnotesize Measur. Basis} & $p_{1,2}$ & $p_{1,2^{\perp}}$ & $p_{1^{\perp},2}$ & $p_{1^{\perp},2^{\perp}}$ & $p_1 \cdot p_2$ & $p_{1} \cdot p_{2^{\perp}}$ & $p_{1^{\perp}} \cdot p_2$ & $p_{1^{\perp}} \cdot p_{2^{\perp}}$\\
\cline{1-9}
H, H & 0.97 & 0.03 & 0.00 & 0.00 & 0.97 & 0.03 & 0.00 & 0.00 \\
H, V & 0.01 & 0.99 & 0.00 & 0.00 & 0.01 & 0.99 & 0.00 & 0.00 \\
H, A & 0.58 & 0.41 & 0.00 & 0.00 & 0.59 & 0.41 & 0.00 & 0.00 \\
H, D & 0.42 & 0.58 & 0.00 & 0.00 & 0.42 & 0.58 & 0.00 & 0.00 \\
H, R & 0.39 & 0.61 & 0.00 & 0.00 & 0.39 & 0.61 & 0.00 & 0.00 \\
H, L & 0.61 & 0.38 & 0.00 & 0.00 & 0.62 & 0.38 & 0.00 & 0.00 \\
V, H & 0.00 & 0.00 & 0.96 & 0.04 & 0.00 & 0.00 & 0.96 & 0.04 \\
V, V & 0.00 & 0.00 & 0.03 & 0.97 & 0.00 & 0.00 & 0.03 & 0.97 \\
V, A & 0.00 & 0.00 & 0.61 & 0.39 & 0.00 & 0.00 & 0.61 & 0.39 \\
V, D & 0.00 & 0.00 & 0.38 & 0.61 & 0.00 & 0.00 & 0.38 & 0.61 \\
V, R & 0.00 & 0.00 & 0.35 & 0.64 & 0.00 & 0.00 & 0.35 & 0.64 \\
V, L & 0.00 & 0.00 & 0.64 & 0.36 & 0.00 & 0.00 & 0.64 & 0.36 \\
A, H & 0.39 & 0.01 & 0.53 & 0.02 & 0.41 & 0.01 & 0.54 & 0.02 \\
A, V & 0.01 & 0.37 & 0.02 & 0.54 & 0.02 & 0.39 & 0.02 & 0.54 \\
A, A & 0.24 & 0.16 & 0.34 & 0.21 & 0.26 & 0.16 & 0.34 & 0.21 \\
A, D & 0.18 & 0.22 & 0.21 & 0.33 & 0.18 & 0.24 & 0.23 & 0.31 \\
A, R & 0.16 & 0.25 & 0.18 & 0.35 & 0.16 & 0.27 & 0.19 & 0.34 \\
A, L & 0.26 & 0.14 & 0.36 & 0.19 & 0.27 & 0.14 & 0.36 & 0.19 \\
D, H & 0.55 & 0.02 & 0.45 & 0.02 & 0.53 & 0.02 & 0.45 & 0.02 \\
D, V & 0.01 & 0.57 & 0.02 & 0.45 & 0.01 & 0.54 & 0.01 & 0.46 \\
D, A & 0.32 & 0.26 & 0.29 & 0.18 & 0.32 & 0.24 & 0.28 & 0.20 \\
D, D & 0.23 & 0.35 & 0.18 & 0.29 & 0.22 & 0.34 & 0.18 & 0.29 \\
D, R & 0.21 & 0.37 & 0.16 & 0.31 & 0.19 & 0.37 & 0.16 & 0.31 \\
D, L & 0.35 & 0.22 & 0.32 & 0.17 & 0.34 & 0.20 & 0.31 & 0.18 \\
R, H & 0.65 & 0.02 & 0.33 & 0.01 & 0.64 & 0.02 & 0.33 & 0.01 \\
R, V & 0.01 & 0.66 & 0.01 & 0.30 & 0.02 & 0.66 & 0.01 & 0.31 \\
R, A & 0.39 & 0.27 & 0.22 & 0.13 & 0.40 & 0.26 & 0.21 & 0.14 \\
R, D & 0.29 & 0.39 & 0.12 & 0.19 & 0.28 & 0.40 & 0.13 & 0.19 \\
R, R & 0.27 & 0.41 & 0.11 & 0.20 & 0.26 & 0.42 & 0.12 & 0.19 \\
R, L & 0.41 & 0.25 & 0.22 & 0.12 & 0.41 & 0.24 & 0.22 & 0.13 \\
L, H & 0.32 & 0.01 & 0.63 & 0.04 & 0.32 & 0.02 & 0.63 & 0.03 \\
L, V & 0.01 & 0.32 & 0.03 & 0.64 & 0.01 & 0.32 & 0.02 & 0.65 \\
L, A & 0.18 & 0.14 & 0.42 & 0.27 & 0.19 & 0.13 & 0.41 & 0.28 \\
L, D & 0.14 & 0.21 & 0.23 & 0.41 & 0.13 & 0.22 & 0.24 & 0.40 \\
L, R & 0.12 & 0.23 & 0.22 & 0.43 & 0.12 & 0.23 & 0.22 & 0.43 \\
L, L & 0.21 & 0.12 & 0.44 & 0.24 & 0.21 & 0.12 & 0.43 & 0.25
\end{tabular}
\end{table}

\begin{table}[ht]
\captionof{supptable}{\textbf{Comparison between the two-states probabilities $p(o_c,o_t|m_c,m_t, \omega_{1})$ and the products of marginal single-state probabilities $p(o_c|m_c, \omega^c_{1}) \cdot p(o_t|m_t, \omega^t_{1})$ for the control and the target states when only operation $U_{i_A}$ is acting on the input state.} We denoted as 0, 1, $+$, $-$, $l$ and $r$ the analogue of the polarization states H, V, D, A, L, R in the path degree of freedom. The two sets of probabilities associated to the control and the target states in output are compatible within experimental errors. The experimental error associated to each of these probabilities is $\pm 0.01$.}
\label{tab:Alice_TC}
\centering
\begin{tabular}{c|c||c|c|c|c||c|c|c|c}
{\footnotesize Meas. Basis} & {\footnotesize Prep.-Meas.} & $p_{c,t}$ & $p_{c,t^{\perp}}$ & $p_{c^{\perp},t}$ & $p_{c^{\perp},t^{\perp}}$ & $p_c \cdot p_t$ & $p_{c} \cdot p_{t^{\perp}}$ & $p_{c^{\perp}} \cdot p_t$ & $p_{c^{\perp}} \cdot p_{t^{\perp}}$\\
{\footnotesize (target)} & {\footnotesize Basis (control)} & & & & & & & & \\
\cline{1-10}
H & $+$ & 0.95 & 0.00 & 0.04 & 0.00 & 0.95 & 0.00 & 0.04 & 0.00 \\
D & $+$ & 0.47 & 0.48 & 0.01 & 0.03 & 0.47 & 0.49 & 0.02 & 0.02 \\
R & $+$ & 0.48 & 0.47 & 0.01 & 0.03 & 0.47 & 0.48 & 0.02 & 0.02 \\
H & $-$ & 0.07 & 0.00 & 0.92 & 0.01 & 0.07 & 0.00 & 0.91 & 0.01 \\
D & $-$ & 0.04 & 0.04 & 0.48 & 0.44 & 0.04 & 0.04 & 0.48 & 0.44 \\
R & $-$ & 0.04 & 0.04 & 0.41 & 0.51 & 0.04 & 0.04 & 0.41 & 0.51 \\
H & \textit{r} & 0.55 & 0.00 & 0.44 & 0.01 & 0.55 & 0.01 & 0.44 & 0.00 \\
D & \textit{r} & 0.20 & 0.26 & 0.28 & 0.26 & 0.22 & 0.24 & 0.26 & 0.28 \\
R & \textit{r} & 0.28 & 0.24 & 0.18 & 0.30 & 0.24 & 0.28 & 0.22 & 0.26 \\
H & \textit{l} & 0.50 & 0.00 & 0.50 & 0.00 & 0.50 & 0.00 & 0.50 & 0.00 \\
D & \textit{l} & 0.30 & 0.28 & 0.21 & 0.21 & 0.30 & 0.28 & 0.22 & 0.21 \\
R & \textit{l} & 0.27 & 0.30 & 0.20 & 0.23 & 0.27 & 0.30 & 0.20 & 0.23 \\
H & 0 & 0.51 & 0.00 & 0.49 & 0.00 & 0.50 & 0.00 & 0.49 & 0.00 \\
D & 0 & 0.26 & 0.30 & 0.26 & 0.17 & 0.30 & 0.27 & 0.23 & 0.21 \\
R & 0 & 0.28 & 0.26 & 0.23 & 0.23 & 0.27 & 0.26 & 0.24 & 0.23 \\
H & 1 & 0.56 & 0.00 & 0.43 & 0.01 & 0.56 & 0.00 & 0.44 & 0.00 \\
D & 1 & 0.27 & 0.29 & 0.22 & 0.23 & 0.27 & 0.29 & 0.21 & 0.23 \\
R & 1 & 0.28 & 0.28 & 0.17 & 0.27 & 0.25 & 0.31 & 0.20 & 0.24
\end{tabular}
\end{table}

\begin{table}[ht]
\captionof{supptable}{\textbf{Comparison between the two-states probabilities $p(o_c,o_t|m_c,m_t, \omega_{1})$ and the products of marginal single-state probabilities $p(o_c|m_c, \omega^c_{1}) \cdot p(o_t|m_t, \omega^t_{1})$ for the control and the target states when only operation $U_{i_B}$ is acting on the input state.} The two sets of probabilities associated to the control and the target states in output are compatible within experimental errors. The experimental error associated to each of these probabilities is $\pm 0.01$.}
\label{tab:Bob_TC}
\centering
\begin{tabular}{c|c||c|c|c|c||c|c|c|c}
{\footnotesize Meas. Basis} & {\footnotesize Prep.-Meas.} & $p_{c,t}$ & $p_{c,t^{\perp}}$ & $p_{c^{\perp},t}$ & $p_{c^{\perp},t^{\perp}}$ & $p_c \cdot p_t$ & $p_{c} \cdot p_{t^{\perp}}$ & $p_{c^{\perp}} \cdot p_t$ & $p_{c^{\perp}} \cdot p_{t^{\perp}}$\\
{\footnotesize (target)} & {\footnotesize Basis (control)} & & & & & & & & \\
\cline{1-10}
H & $+$ & 0.47 & 0.33 & 0.11 & 0.09 & 0.47 & 0.33 & 0.12 & 0.08 \\
D & $+$ & 0.50 & 0.27 & 0.18 & 0.06 & 0.51 & 0.25 & 0.16 & 0.08 \\
R & $+$ & 0.75 & 0.02 & 0.23 & 0.01 & 0.75 & 0.02 & 0.23 & 0.00 \\
H & $-$ & 0.11 & 0.15 & 0.49 & 0.25 & 0.16 & 0.11 & 0.44 & 0.29 \\
D & $-$ & 0.12 & 0.12 & 0.60 & 0.16 & 0.17 & 0.07 & 0.55 & 0.21 \\
R & $-$ & 0.25 & 0.01 & 0.67 & 0.07 & 0.24 & 0.02 & 0.68 & 0.06 \\
H & \textit{r} & 0.43 & 0.44 & 0.10 & 0.03 & 0.46 & 0.41 & 0.07 & 0.06 \\
D & \textit{r} & 0.54 & 0.32 & 0.09 & 0.05 & 0.54 & 0.32 & 0.09 & 0.05 \\
R & \textit{r} & 0.86 & 0.01 & 0.11 & 0.02 & 0.84 & 0.03 & 0.13 & 0.00 \\
H & \textit{l} & 0.16 & 0.06 & 0.49 & 0.29 & 0.14 & 0.08 & 0.51 & 0.27 \\
D & \textit{l} & 0.13 & 0.09 & 0.62 & 0.15 & 0.17 & 0.05 & 0.59 & 0.19 \\
R & \textit{l} & 0.20 & 0.01 & 0.73 & 0.05 & 0.20 & 0.01 & 0.73 & 0.05 \\
H & 0 & 0.26 & 0.28 & 0.26 & 0.19 & 0.29 & 0.26 & 0.24 & 0.22 \\
D & 0 & 0.40 & 0.14 & 0.41 & 0.05 & 0.44 & 0.10 & 0.37 & 0.09 \\
R & 0 & 0.48 & 0.04 & 0.42 & 0.06 & 0.47 & 0.05 & 0.44 & 0.05 \\
H & 1 & 0.32 & 0.23 & 0.29 & 0.15 & 0.34 & 0.22 & 0.27 & 0.17 \\
D & 1 & 0.32 & 0.24 & 0.32 & 0.12 & 0.36 & 0.20 & 0.28 & 0.16 \\
R & 1 & 0.56 & 0.00 & 0.41 & 0.03 & 0.54 & 0.02 & 0.43 & 0.01
\end{tabular}
\end{table}


\end{document}